\definecolor{forestgreen}{rgb}{0.08, 0.4, 0.13}
\definecolor{darkBlue}{rgb}{0.08, 0.13, 0.4}
\definecolor{THc}{rgb}{0.7,0.2,0.7}
\newcommand\numberthis{\addtocounter{equation}{1}\tag{\theequation}}
\newtheorem{theorem}{Theorem}
\newtheorem{corollary}{Corollary}
\newtheorem{definition}{Definition}
\newtheorem{conjecture}{Conjecture}
\newtheorem{fact}{Fact}
\newtheorem{lemma}{Lemma}
\newcommand{\SM}{Appendix}
\renewcommand{\eqref}[1]{Eq.~(\ref{#1})} 
\newcommand{\subfigimg}[3][,]{%
	\setbox1=\hbox{\includegraphics[#1]{#3}}%
	\leavevmode\rlap{\usebox1}%
	\rlap{\hspace*{2pt}\raisebox{\dimexpr\ht1-0.5\baselineskip}{{\bfseries \large\textsf{#2}}}}%
	\phantom{\usebox1}%
}
\begin{document}

\title{Triangle Criterion: a mixed-state magic criterion with applications in distillation and detection}

\author{Zhenhuan Liu}
\email{qubithuan@gmail.com}
\thanks{equal contribution}
\noaffiliation

\author{Tobias Haug}
\email{tobias.haug@u.nus.edu}
\thanks{equal contribution}
\affiliation{Quantum Research Center, Technology Innovation Institute, Abu Dhabi, UAE}

\author{Qi Ye}
\email{yeq22@mails.tsinghua.edu.cn}
\affiliation{Center for Quantum Information, IIIS, Tsinghua University, Beijing, China}
\affiliation{Shanghai Qi Zhi Institute, Shanghai, China}

\author{Zi-Wen Liu}
\email{zwliu0@tsinghua.edu.cn}
\affiliation{Yau Mathematical Sciences Center, Tsinghua University, Beijing 100084, China}

\author{Ingo Roth}
\email{ingo.roth@tii.ae}
\affiliation{Quantum Research Center, Technology Innovation Institute, Abu Dhabi, UAE}
	
\date{\today}
	
\begin{abstract}
We introduce a mixed-state magic criterion, the \emph{Triangle Criterion}, which plays a role for magic analogous to the Positive Partial Transposition (PPT) Criterion for entanglement: it combines strong detection capability, a clear geometric interpretation, and an operational link to magic distillation. 
Using this criterion, we uncover several new features of multi-qubit magic distillation and detection. 
We prove that genuinely multi-qubit magic distillation protocols are strictly more powerful than all single-qubit schemes by showing that the Triangle Criterion is not stable under tensor products.
Moreover, we show that, with overwhelming probability, multi-qubit magic states with relatively low rank cannot be distilled by any single-qubit distillation protocol.
We derive an upper bound on the minimal purity of magic states, which is conjectured to be tight with both numerical and constructive evidences. 
Using this minimal-purity result, we predict 
the existence of unfaithful magic states, namely states that cannot be detected by any fidelity-based magic witness, and reveal  fundamental limitations of mixed-state magic detection in any single-copy scheme.
\end{abstract}
	
\maketitle

\section{Introduction}
Entanglement and magic (or non-stabilizerness) are central non-classical resources that empower quantum communication and computation beyond what is possible classically~\cite{Horodecki2009EntanglementRMP,Veitch2014ResourceTheoryStabilizer}.
Entanglement has been studied for decades, leading to a well-developed toolbox of entanglement measures and witnesses that clarify both the geometry of entangled states and their operational role in information-processing tasks~\cite{PlenioVirmani2007EntanglementMeasures,GuhneToth2009EntanglementDetection}. 
By contrast, the theory of magic is comparatively young: after the Gottesman--Knill theorem~\cite{gottesman1998heisenberg}, it became clear that non-stabilizer ``magic'' states are the computational resources that turn classically simulable stabilizer circuits into universal quantum computers~\cite{bravyi2005universal,Veitch2012NegativeQuasiProbability,HowardCampbell2017RoM}. 
Despite the surging recent interest and efforts on magic theory, most progress has focused on the quantification aspect within the framework of resource theories~\cite{HowardCampbell2017RoM,SeddonCampbell2019QuantifyingMagic,wang2020efficient,leone2022stabilizer,liu2022many}. 
Simple and physically meaningful tools for characterizing magic, especially for general mixed states, are still largely lacking.

A promising route is to draw inspiration from the mature tools developed for entanglement detection, among which the Positive Partial Transpose (PPT) criterion~\cite{peres1996separability} has emerged as a prominent one.
In its simplest form, it asks whether the partial transpose $\rho^{\mathrm{T}_A}$ of a bipartite state $\rho$ is positive semidefinite, where $\mathrm{T}_A$ denotes the partial transposition on subsystem $A$.
Despite this concise mathematical formulation, the PPT criterion enjoys several key properties:
(1) it provides a necessary and sufficient condition for separability of mixed states in $2\times 2$ and $2\times 3$ dimensions and for pure states in arbitrary dimensions~\cite{Horodecki1996Separability};
(2) it is fundamentally linked to bound entanglement: any PPT state is non-distillable and hence cannot be used to prepare Bell states~\cite{Horodecki1998BoundEntanglement,VidalWerner2002Computable};
(3) it yields a tight purity threshold for entanglement, $\mathrm{Tr}(\rho^2)>1/(d-1)$ with $d$ being the dimension of $\rho$, so that all states below this threshold are guaranteed to be separable~\cite{Gurvits2002largest}.
Because of these favourable properties, the PPT criterion has been widely adopted in diverse tasks, including the resource theory of entanglement~\cite{Audenaert2003PPTcost,WangWilde2023ExactPPT}, quantum device benchmarking~\cite{Shaw2024BenchmarkingEntangled60Atom}, and quantum field theory~\cite{Calabrese2012NegativityQFT}.
It is thus natural and important to ask: \emph{is it possible to construct a single, mixed-state magic criterion that reproduces the key advantages of the PPT criterion?}

In this work, inspired by various previous results on magic distillation and stabilizer polytopes~\cite{reichardt2004improved,reichardt2006quantum,okay2021extremal}, we establish the \emph{Triangle Criterion}, a mixed-state magic criterion taking the form of fidelity triangle inequalities, with following properties closely mirroring those of the PPT entanglement criterion:
(1) it provides a necessary and sufficient condition for detecting magic in single-qubit mixed states and in pure states of an arbitrary number of qubits;
(2) a multi-qubit mixed state can be converted into a single-qubit magic state by stabilizer operations (including post-selection) if and only if it is detected by the criterion;
(3) it yields an upper bound on the minimal purity of magic states, $1/(d-1/2)$, which is conjectured to be tight with numerical and constructive evidences.
Despite these similarities, a key difference from PPT is that the Triangle Criterion is not tensor-stable: the tensor product of two states that each satisfy the Triangle Criterion can violate it when viewed as a larger system, leading to nontrivial consequences for magic distillation.

With this magic criterion, we identify a family of two-qubit states that are provably useless for any single-qubit magic distillation protocol: under arbitrary stabilizer processing, every single-qubit state obtainable from one copy is a stabilizer state. 
Nevertheless, we construct a distillation routine that converts multiple copies of these two-qubit states into $T$ states, providing (to our knowledge) the first proof that genuinely multi-qubit magic distillation protocols are strictly more powerful than all single-qubit schemes.
We further show that, for random mixed states with rank polynomial with qubit number, single-qubit magic distillation protocols fail with high probability.
This indicates that single-qubit distillability of multi-qubit states is fragile to noise.
We further establish the existence of \emph{unfaithful magic states}, which cannot be detected by any fidelity-based magic witness. 
In addition, our conjectured lower bound on the minimal purity of magic states implies both the existence of \emph{absolute stabilizer states}, states that cannot be transformed into magic states by any unitary evolution and measurement post-selection, and a \emph{fundamental limitation} on certifying the magic of mixed states.

\section{Triangle Criterion}

An $n$-qubit Pauli operator is an element of the $n$-qubit Pauli group, namely a tensor product of single-qubit Pauli operators $I,X,Y,Z$ up to an overall phase. 
An $n$-qubit stabilizer pure state is defined as the unique simultaneous $+1$ eigenstate of a set of $n$ independent commuting Pauli operators $P_1,\dots,P_n$. 
Equivalently, it is the unique state stabilized by an abelian subgroup of the $n$-qubit Pauli group generated by these operators. 
A general stabilizer state is any convex mixture of stabilizer pure states. The set of all such states forms the stabilizer polytope, namely the convex hull of all stabilizer pure states. In other words, the stabilizer polytope consists precisely of all density matrices that can be written as probabilistic mixtures of stabilizer pure states. In the simplest case of a single qubit, the six pure stabilizer states $\ket{0}, \ket{1}, \ket{+}, \ket{-}, \ket{+i}, \ket{-i}$ form the vertices of an octahedron inside the Bloch sphere. This octahedron is precisely the single-qubit stabilizer polytope.
A Clifford unitary is a unitary operator that maps each Pauli group element to another under conjugation. Equivalently, Clifford unitaries preserve the set of stabilizer states. 
States outside the stabilizer polytope are called magic states. 
That is, magic states are the states that cannot be expressed as a convex mixture of stabilizer pure states.

Inspired of the geometry of single-qubit stabilizer polytope~\cite{garcia2017geometry}, we formulate the following general magic criterion:
\begin{theorem}[Triangle Criterion]\label{thm:magic_PPT}
Given any $n$-qubit quantum state $\rho\in\mathcal{H}_{2^n}$, if there exist three nearest-neighbour stabilizer pure states $\{\psi_1,\psi_2,\psi_3\}$ with $\psi=\ketbra{\psi}{\psi}$ and $\Tr(\psi_i\psi_j)=1/2$ for all $i\neq j$ such that 
\begin{equation}
\Tr(\rho\psi_1)>\Tr(\rho\psi_2)+\Tr(\rho\psi_3)\,,
\end{equation}
then $\rho$ is not a mixture of stabilizer states.
This criterion detects all single-qubit mixed magic states and all multi-qubit pure magic states.
\end{theorem}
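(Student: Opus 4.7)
My plan is to split Theorem~\ref{thm:magic_PPT} into (i) \emph{soundness} --- any triangle violation certifies magic --- and (ii) \emph{tightness} in the claimed regimes. For (i), linearity of $\Tr(\rho\psi_i)$ in $\rho$ reduces the task to showing $\Tr(\sigma\psi_1)\leq \Tr(\sigma\psi_2)+\Tr(\sigma\psi_3)$ for every pure stabilizer state $\sigma$. The crux will be a structural lemma: any three stabilizer pure states $\psi_1,\psi_2,\psi_3$ on $n$ qubits with pairwise overlaps $1/2$ can be brought by a single Clifford unitary to the form $|0\rangle\langle 0|^{\otimes(n-1)}\otimes\tau_i$, where $\tau_1,\tau_2,\tau_3$ are three single-qubit stabilizer states with pairwise orthogonal Bloch vectors. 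This reduces (i) to the single-qubit setting: projecting $\sigma$ onto $|0\rangle^{\otimes(n-1)}$ on the first $n-1$ qubits gives $\Tr(\sigma\psi_i)=p\,\Tr(\pi\tau_i)$ with $\pi$ a single-qubit stabilizer state and $p\geq 0$ the corresponding probability, so the inequality follows from the single-qubit triangle inequality, which I will verify by enumerating the six stabilizer Bloch vectors against the eight oriented orthogonal triples.

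To prove the structural lemma, I will Clifford-normalize $\psi_1=|0\rangle\langle 0|^{\otimes n}$ so that $S_1\cong\mathbb{F}_2^n$, and identify $T_{1j}:=S_1\cap S_j$ with hyperplanes $V_{1j}=\ker a_{1j}$ for $j=2,3$. Commutation with the size-$2^{n-1}$ subgroup $T_{1j}$ forces every generator in $S_j\setminus T_{1j}$ to have $X$-support exactly $a_{1j}$, giving the canonical form $g_j=\pm i^{c_j}X^{a_{1j}}Z^{w_j}$. A short $\mathbb{F}_2$-linear computation then shows that $|S_2\cap S_3|=2^{n-1}$ forces $a_{12}=a_{13}$ (otherwise $X^{a_{12}}\neq X^{a_{13}}$ prevents $T_{12}g_2$ from meeting $T_{13}g_3$), so all three pairwise intersections coincide with a common size-$2^{n-1}$ subgroup $T$, whose codespace is a two-dimensional logical qubit carrying $\psi_1,\psi_2,\psi_3$; a further Clifford rotates $T$ to $\langle Z_1,\ldots,Z_{n-1}\rangle$, completing the lemma. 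For (ii), I will treat the two regimes separately. In the single-qubit case, writing $\rho=(I+\vec r\cdot\vec\sigma)/2$, enumerating the triangle inequalities across sign-and-axis choices of the triple recovers exactly the eight facet inequalities $\pm r_x\pm r_y\pm r_z\leq 1$ of the stabilizer octahedron, so any magic state must violate one of them. For multi-qubit pure $|\psi\rangle$, I will invoke the standard fact that every pure non-stabilizer state admits a Clifford measurement on $n-1$ qubits with a positive-probability outcome $s$ whose post-measurement single-qubit state $\pi$ is non-stabilizer; applying the single-qubit result to $\pi$ and lifting the violating triple as $\psi_i=U^\dagger(|s\rangle\langle s|\otimes \tau_i)U$ preserves the strict inequality, since $\Tr(\psi\psi_i)=p\,\Tr(\pi\tau_i)$ with $p>0$.

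I expect the main obstacle to be the structural lemma, in particular excluding the ``scenario B'' in which the triple intersection $S_1\cap S_2\cap S_3$ has size only $2^{n-2}$. The pairwise-overlap constraint alone only guarantees $|S_1\cap S_2\cap S_3|\geq 2^{n-2}$, so the tighter bound must be squeezed out of the joint incidence of three codimension-one subgroups together with the canonical forms of their cross-generators --- a bookkeeping exercise over $\mathbb{F}_2$ that is conceptually light but technically finicky, especially when tracking the $\pm 1$ and $\pm i$ phases in the signed Pauli group. A secondary, lighter subtlety is the single-qubit-distillability statement invoked in the multi-qubit pure case, which I will either cite from the magic-distillation literature or prove in-line by induction on $n$, observing that a pure state all of whose Clifford-measurement branches are stabilizer states must itself be stabilizer.
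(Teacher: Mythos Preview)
Your strategy coincides with the paper's: both establish a structural lemma that Clifford-normalizes the triple to $\{\ket{0},\ket{+},\ket{+i}\}\otimes\ket{0^{n-1}}$, derive soundness by projecting any pure stabilizer state onto $\ket{0^{n-1}}$ and invoking the single-qubit octahedron, and cite Reichardt for multi-qubit-pure completeness. The only substantive difference is in how the structural lemma is proved. Where you track stabilizer-group intersections and $X$-supports over $\mathbb{F}_2$, the paper works directly with the state-vector form: any pure stabilizer state with squared overlap $1/2$ with $\ket{0^n}$ is (up to global phase) $\tfrac{1}{\sqrt{2}}(\ket{0^n}+\alpha\ket{x})$ for some nonzero $x\in\{0,1\}^n$ and $\alpha\in\{\pm1,\pm i\}$, and two such states have mutual squared overlap $1/2$ only if they share the same $x$ with $\alpha'\alpha^{-1}\in\{\pm i\}$; a CNOT circuit sending $\ket{x}\mapsto\ket{10^{n-1}}$ followed by a single-qubit Clifford then finishes. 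This two-line argument bypasses your ``scenario~B'' worry entirely and avoids all phase bookkeeping in the signed Pauli group, so you may prefer to swap it in; conversely, your stabilizer-group route has the virtue of not relying on the affine-subspace normal form of stabilizer states, which some readers may find more self-contained.
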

\noindent 
Similar in spirit to the PPT criterion, the validity of the Triangle Criterion can be understood as follows, from two simple structural observations about stabilizer states: (i) Given two stabilizer pure states $\psi_1$ and $\psi_2$ with $\Tr(\psi_1 \psi_2)=1/2$, for another stabilizer pure state $\phi$, the ratio of $\Tr(\psi_1\phi)/\Tr(\psi_2\phi)$ can only take the value of $1/2$, $1$, and $2$ when both $\Tr(\psi_1\phi)$ and $\Tr(\psi_2\phi)$ are non-zero;
(ii) Given three stabilizer pure states $\{\psi_1,\psi_2,\psi_3\}$ with $\Tr(\psi_i\psi_j)=\frac{1}{2}$ for all $i\neq j$ and another stabilizer pure state $\phi$, when $\Tr(\psi_1\phi)=0$, we have $\Tr(\psi_2\phi)=\Tr(\psi_3\phi)$.
Therefore, when $\rho$ is a stabilizer pure state, it holds that $\Tr(\rho\psi_1)\le\Tr(\rho\psi_2)+\Tr(\rho\psi_3)$ where the equality is achieved only when $\Tr(\rho\psi_1)=2\Tr(\rho\psi_2)=2\Tr(\rho\psi_3)$, or $\Tr(\rho\psi_2)=0$, or $\Tr(\rho\psi_3)=0$.
This inequality naturally generalizes to the case when $\rho$ is a mixed stabilizer state, i.e., a convex combination of stabilizer pure states.
In the single-qubit case, the equalities $\Tr(\rho\psi_1)=\Tr(\rho\psi_2)+\Tr(\rho\psi_3)$ describe all facets of the stabilizer octahedron and thus give the necessary and sufficient criterion for all single-qubit mixed states.
The detailed proof is given in Appendix~\ref{app:triangle}.

In what follows, we will establish many close parallels between the PPT criterion and the Triangle Criterion. However, it is important to emphasize a key aspect in which the Triangle Criterion is less convenient than the PPT criterion: it is in general computationally demanding. For PPT, one only needs to apply partial transposition and check positivity, which renders the criterion directly computable through spectral decomposition or semi-definite programming.
Furthermore, PPT criterion can also be estimated experimentally through suitable polynomial relaxations~\cite{gray2018machine,elben2020mix,Wang2022DetectingQuantifyingEntanglement,zhang2025entanglement,tarabunga2025quantifying}. By contrast, in its present form, the Triangle Criterion requires searching over $4(2^n-1)2^n\prod_{l=1}^n(2^l+1)\approx 2^{n^2}$ low-rank observables of the form
\begin{equation}
    W_{ijk}=\psi_i+\psi_j-\psi_k\,,
\end{equation}
and checking whether any of them has a negative expectation value.

We view this difference as reflecting an intrinsic structural distinction between entanglement and magic, rather than a weakness of the criterion itself. In entanglement theory, the PPT test arises from a fixed linear transformation that is naturally compatible with the bipartite tensor-product structure. For magic, by contrast, the relevant free set is the stabilizer polytope, whose boundary is determined by many discrete facets, so one should not expect a direct analogue of the PPT procedure with the same computational simplicity and detection capability. This obstruction already appears in the single-qubit case, where unit-trace positive semidefinite operators occupy the region inside the Bloch sphere. If one wishes to detect all single-qubit magic states with a single trace-preserving linear map, then this map must send all quantum states inside the stabilizer octahedron to points inside the Bloch sphere, while sending all quantum states inside the Bloch sphere but outside the octahedron to points outside the Bloch sphere. Such a map does not exist, since no linear map can transform an octahedron into a sphere, i.e. map flat faces to a curved boundary.
Nevertheless, although the Triangle Criterion is less directly computable than PPT, it remains conceptually natural and is strong enough to recover the full single-qubit mixed-state case, all pure-state magic, and the exact characterization of single-qubit magic distillability established in Theorem~\ref{thm:bound_magic}.

\section{Multi-qubit magic distillation}
The most crucial property of the Triangle Criterion is its connection with magic distillation, which starts from the following result:
\begin{theorem}[Magic Distillation]\label{thm:bound_magic}
Given an $n$-qubit state $\rho$, it can be non-deterministically transformed into a single-qubit magic state with Clifford unitary rotations, Pauli measurements with post-selection, and stabilizer ancillas if and only if it can be detected by Triangle Criterion.
\end{theorem}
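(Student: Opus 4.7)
The plan is to prove the two directions by combining an explicit distillation protocol with a pullback argument that tracks the single-qubit triangle witness back to the multi-qubit system.

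For sufficiency (Triangle-detected implies distillable), I would use the well-known structural fact that three stabilizer pure states $\{\psi_1,\psi_2,\psi_3\}$ with pairwise overlap $1/2$ necessarily share $n-1$ common stabilizer generators $\{g_1,\ldots,g_{n-1}\}$ and differ only in a trio of mutually anticommuting Paulis $\{P_1,P_2,P_3\}$. A Clifford $U$ sending $g_j\mapsto Z_{j+1}$ and $P_i$ to the corresponding single-qubit Paulis on qubit~$1$ puts each $\psi_i$ into the product form $\pi_i\otimes(|0\rangle\langle 0|)^{\otimes(n-1)}$. The distillation protocol applies $U$, measures $Z_2,\ldots,Z_n$, post-selects on the all-zero outcome (which has probability $p\ge\Tr(\rho\psi_1)>0$), and returns qubit~$1$. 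A direct computation gives $\Tr(\sigma\pi_i)=\Tr(\rho\psi_i)/p$, so the Triangle violation on $\rho$ transfers verbatim to a single-qubit Triangle violation on $\sigma$, and Theorem~\ref{thm:magic_PPT} certifies that $\sigma$ is magic.

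For the converse, any admissible protocol can be put into the canonical form: append a stabilizer ancilla $|\phi\rangle$ on $k$ qubits, apply a Clifford $U$ on all $n+k$ qubits, measure every qubit except the output in the computational basis, and post-select on a fixed outcome $\vec{s}^*$. The single-qubit output $\sigma$ is then the logical qubit of the $2$-dimensional stabilizer codespace $C\subset\mathcal{H}_{n+k}$ selected by the conditioning. If $\sigma$ is magic, Theorem~\ref{thm:magic_PPT} supplies a violating single-qubit triple $(\pi_1,\pi_2,\pi_3)$; lifting each $\pi_i$ to its logical eigenstate $\bar\pi_i\in C$ produces three $(n+k)$-qubit stabilizer pure states with pairwise overlap $1/2$, and the read-out formula $\Tr(\sigma\pi_i)=\Tr[(\rho\otimes|\phi\rangle\langle\phi|)\bar\pi_i]/p$ converts the violation into a Triangle violation for $\rho\otimes|\phi\rangle\langle\phi|$ on $n+k$ qubits.

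The technical heart is then the descent from $n+k$ back to $n$ qubits, i.e., showing that appending a stabilizer ancilla cannot by itself manufacture a Triangle violation. Using closure of the stabilizer formalism under post-selection on stabilizer states, one has the identity $\Tr[(\rho\otimes|\phi\rangle\langle\phi|)\bar\pi_i]=c_i\Tr(\rho\tilde\psi_i)$ with $|\tilde\psi_i\rangle\propto(I_{\mathrm{sys}}\otimes\langle\phi|)|\bar\pi_i\rangle$ an unnormalized stabilizer pure state on the original $n$-qubit system and $c_i\ge 0$. The weighted inequality $c_1\Tr(\rho\tilde\psi_1)>c_2\Tr(\rho\tilde\psi_2)+c_3\Tr(\rho\tilde\psi_3)$ must then be shown to imply a genuine $n$-qubit Triangle violation; I would carry this out by a case analysis on how the ancilla stabilizer interacts with the codespace stabilizers, arguing that either the three $\tilde\psi_i$ collapse to a common state---in which case the bound $c_1\le c_2+c_3$ (inherited from the stabilizer positivity $\Tr[(I_{\mathrm{sys}}\otimes|\phi\rangle\langle\phi|)(\bar\pi_2+\bar\pi_3-\bar\pi_1)]\ge 0$) rules out any strict violation---or they remain distinct with pairwise overlap exactly $1/2$ and balanced weights, yielding a standard $n$-qubit triple that directly witnesses the detection of $\rho$.
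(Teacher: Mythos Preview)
Your sufficiency direction is essentially the paper's argument: both show that a pairwise-$1/2$ triple of stabilizer states can be Clifford-rotated to $\{\ket{0},\ket{+},\ket{+i}\}\otimes\ket{0^{n-1}}$, so post-selecting the last $n-1$ qubits on $\ket{0^{n-1}}$ transports the violation to a single-qubit Triangle violation (this is the content of the paper's Lemma~\ref{lem:witness}).

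For the converse, the paper takes a different and shorter route. Rather than lifting the single-qubit witness to $n+k$ qubits and then attempting a descent through the ancilla, it invokes Lemma~1 of Reichardt~\cite{reichardt2006quantum}: any stabilizer protocol with ancillas that outputs a single-qubit magic state can be replaced by one \emph{without} ancillas, consisting of a single $n$-qubit Clifford followed by projection onto $\ket{0^{n-1}}$. This immediately yields an $n$-qubit witness of the form $W_U$, hence Triangle detection of $\rho$ itself.

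The gap in your proposal is the descent step. Your dichotomy---either all three $\tilde\psi_i$ coincide, or they remain distinct with pairwise overlap exactly $1/2$ and equal weights $c_i$---is asserted, not proved, and is not obviously exhaustive. You do not handle the cases where some $c_i=0$ (so $\tilde\psi_i$ is undefined), where exactly two of the $\tilde\psi_i$ coincide, or where the three survive but with altered overlaps or unequal weights. The bound $c_1\le c_2+c_3$ that you extract from stabilizer positivity is correct and disposes of the fully degenerate case, but it does not by itself control the mixed cases, and the claim of ``balanced weights'' in the non-degenerate case needs a structural argument about how the ancilla stabilizer $\langle h_1,\ldots,h_k\rangle$ interacts with the codespace stabilizer and the three logical Paulis. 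Carrying this out rigorously is essentially what Reichardt's lemma does; the paper sidesteps the issue by citing it directly.
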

This theorem can be proved using two key ingredients.
First, we show in \SM{}~\ref{app:equivalence} that the Triangle Criterion is equivalent to estimating the magic witness
\begin{equation}\label{eq:magic_witness}
W_U=U\left[(\ketbra{+}{+}+\ketbra{+i}{+i}-\ketbra{0}{0})\otimes\ketbra{0^{n-1}}{0^{n-1}}\right]U^\dagger,
\end{equation}
where $U$ is chosen from the entire Clifford group.
Notice that $\ketbra{+}{+}+\ketbra{+i}{+i}-\ketbra{0}{0}$ is itself a magic witness and, after suitable single-qubit Clifford rotations, can be used to detect all single-qubit magic states.
Estimating $\Tr(W_U\rho)$ is therefore operationally equivalent to first rotating $\rho$ with a Clifford unitary, then projecting $n-1$ qubits onto the state $\ket{0^{n-1}}$, and finally testing the magic of the resulting single-qubit state.
Consequently, if a state is detected by the Triangle Criterion, there exists a stabilizer operation that reduces it to a single-qubit magic state after the projection.
Since Clifford unitaries and projections onto $\ket{0^{n-1}}$ cannot create magic, this also establishes the correctness of the Triangle Criterion in the multi-qubit setting.

The second ingredient comes from Lemma~1 of Ref.~\cite{reichardt2006quantum}, which shows that if a stabilizer procedure (with measurement post-selection and stabilizer ancillas) can map an $n$-qubit state $\rho$ into a single-qubit magic state, then there exists another procedure that maps $\rho$ to a single-qubit magic state and consists solely of an $n$-qubit Clifford unitary followed by projecting $n-1$ qubits onto $\ket{0^{n-1}}$, which is equivalent to estimating \eqref{eq:magic_witness}.
We can therefore conclude that, if there exists a stabilizer operation that maps $\rho$ to a single-qubit magic state, then $\rho$ can be detected by the Triangle Criterion, which completes the proof of Theorem~\ref{thm:bound_magic}.
Moreover, Theorem~\ref{thm:bound_magic} also implies that the Triangle Criterion detects all pure magic states, since any multi-qubit pure magic state can be reduced to a single-qubit pure magic state by a suitable stabilizer procedure~\cite{reichardt2004improved}.
This means that, for pure-state magic detection, the Triangle Criterion is equivalent to other criteria such as stabilizer fidelity~\cite{Bravyi2019simulationofquantum}, stabilizer nullity~\cite{Beverland2020nullity}, and stabilizer Rényi entropy~\cite{leone2022stabilizer}, in the sense that they have exactly the same detection power on pure states.

Note that Theorem~\ref{thm:bound_magic} is similar with the relationship between PPT and entanglement distillation: a PPT entangled state cannot be used for preparing Bell states.
In entanglement theory, the PPT criterion is used not only for detecting state entanglement, but also for characterizing the entanglement-generation power of a quantum channel, based on the key observation that the Choi state of an entanglement-breaking channel is separable~\cite{horodecki2003entanglementbreaking}.
Therefore, a natural question is whether the Triangle Criterion can also characterize the properties of quantum channels.
Note that preparing the Choi state of a given channel using the query access to the channel does not require additional magic. This leads to the following result, whose detailed proof is given in Appendix~\ref{app:channel}:
\begin{corollary}\label{coro:channel}
Given a quantum channel $\mathcal{C}:\mathcal{H}_{2^n}\to\mathcal{H}_{2^m}$, it can be used to non-deterministically produce a single-qubit magic state by stabilizer operations, including stabilizer-state inputs, Clifford unitary rotations, and Pauli measurements with post-selection, if and only if its Choi state $\mathcal{I}_n\otimes\mathcal{C}(\Phi^+_n)$ can be detected by the $(n+m)$-qubit Triangle Criterion, where $\mathcal{I}_n$ is the $n$-qubit identity channel and $\ket{\Phi^+_n}=\frac{1}{\sqrt{2^n}}\sum_{i=0}^{2^n-1}\ket{ii}$ is the maximally entangled state.
\end{corollary}

Despite the similarity, a major difference between PPT and Triangle criteria is the stability under tensor product.
It is easy to show that if two states are both PPT, then the tensor product of these two states is also PPT.
This means that there exists some bound entangled state which cannot be transformed into Bell states given arbitrary number of copies.
However, this conclusion cannot be directly generalized to Triangle Criterion as the definition of magic has no space structure, i.e., it is not compatible with the tensor product operation.
Actually, we prove that the Triangle Criterion is not tensor stable:
\begin{theorem}\label{thm:activation}
There exists a two-qubit state which cannot be converted, even probabilistically, into single-qubit magic states with stabilizer operations using a single copy.
However, the state can be used to distil arbitrary single-qubit magic states given multiple copies.
\end{theorem}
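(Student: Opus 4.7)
The plan is to exhibit a concrete two-qubit state $\rho$ that is (i) non-stabilizer, (ii) invisible to the Triangle Criterion (hence, by Theorem~\ref{thm:bound_magic}, un-distillable to a single-qubit magic state from one copy), yet (iii) whose tensor powers $\rho^{\otimes N}$ are detected, so that iterated application of Theorem~\ref{thm:bound_magic} plus known single-qubit magic distillation yields $T$ states.

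The first step is to propose a candidate of the form $\rho(p) = p\,\psi_\star + (1-p)\sigma_\star$, where $\psi_\star$ is a pure two-qubit magic state with a large Clifford stabilizer (for instance $\ket{T}^{\otimes 2}$, or a magic state supported on the $\{\ket{00},\ket{11}\}$ subspace) and $\sigma_\star$ is a carefully chosen stabilizer mixture, such as a uniform mixture over the Clifford orbit of some stabilizer state with the same invariance group as $\psi_\star$. The mixing weight $p$ is to be tuned to the threshold $p_c$ at which $\rho(p)$ sits exactly on the boundary of the Triangle-Criterion-satisfying set, so that the state is still non-stabilizer while saturating one or more $W_U$ inequalities simultaneously.

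The second step is to certify property (ii). By Theorem~\ref{thm:bound_magic}, the condition that $\rho$ be single-copy un-distillable is equivalent to $\mathrm{Tr}(W_U \rho)\geq 0$ for every two-qubit Clifford $U$. Since the two-qubit Clifford group is finite and small, and since $\rho(p)$ carries a large stabilizer subgroup by construction, the verification reduces to checking a handful of orbit representatives of the quotient $\mathrm{Cl}_2 / \mathrm{Stab}(\rho)$. Each representative yields an explicit linear inequality in $p$, from which the maximal admissible $p_c$ can be read off.

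The third step is to establish property (iii). Here I would take $p$ just below $p_c$ and show that $\rho(p)^{\otimes 2}$ (or, if needed, some higher tensor power) violates the Triangle Criterion. Two complementary routes are available: either produce an explicit stabilizer circuit on two copies whose post-selected single-qubit output is non-stabilizer (for example, a CNOT followed by measurement of one qubit that gates the magic component of $\psi_\star$ onto the other qubit while the stabilizer noise $\sigma_\star$ contributes only stabilizer outputs); or, more abstractly, exhibit a single Clifford $U$ on four qubits together with a projector $\ket{0^3}\!\bra{0^3}$ for which $\mathrm{Tr}(W_U \rho^{\otimes 2}) < 0$. Either way, Theorem~\ref{thm:bound_magic} then guarantees extraction of a single-qubit magic state from two copies; concatenating this subroutine with a standard magic-state distillation protocol (e.g.\ the 5-qubit Reed--Muller/Bravyi--Kitaev routine) purifies the output to $\ket{T}$ with arbitrary fidelity.

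The main obstacle is Step 2: checking non-negativity of $\mathrm{Tr}(W_U \rho)$ over the entire two-qubit Clifford group in a form clean enough to pin down the threshold $p_c$ analytically rather than numerically. The proposed symmetrization of $\rho$ is designed precisely to make this combinatorics tractable, since every symmetry of $\rho$ acts on the $W_U$ family by permuting witnesses and thus collapses the inequalities to a short list. A secondary difficulty in Step 3 is constructing a distillation subroutine that is not only provably successful on the idealized $\rho(p_c)$ but also robust to the need for $p$ strictly below $p_c$, so that single-copy un-distillability and multi-copy distillability hold at the same state; stability of the violation of the Triangle Criterion under small perturbations of $\rho^{\otimes N}$, which follows from continuity of $\mathrm{Tr}(W_U \cdot)$, should suffice to close this gap.
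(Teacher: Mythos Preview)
Your high-level strategy matches the paper's: certify single-copy undistillability via Theorem~\ref{thm:bound_magic} (all two-qubit Triangle witnesses non-negative), then show that $\rho^{\otimes 2}$ violates some four-qubit Triangle witness, and finally feed the resulting single-qubit magic state into a standard Bravyi--Kitaev routine. That is exactly the skeleton of the paper's argument.

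Where you diverge is in execution, and here your proposal remains a search plan rather than a proof. The paper does \emph{not} proceed analytically: it runs a numerical optimization over two-qubit density matrices, maximizing the four-qubit Triangle violation subject to the constraint that every two-qubit witness $W_{ijk}$ is non-negative. The resulting state, written out explicitly in the appendix, has no visible Clifford symmetry (its Pauli coefficients involve $1/\sqrt{12}$, $1/\sqrt{5}$, $\sqrt{5}/8$, etc.), and the accompanying four-qubit distillation circuit is likewise found rather than derived. The output single-qubit state is then checked numerically to exceed the $3/\sqrt{7}$ threshold for distillation to $\ket{H}$. Your symmetric ansatz $\rho(p)=p\,\psi_\star+(1-p)\sigma_\star$ is a sensible heuristic for shrinking the Clifford enumeration in Step~2, but nothing guarantees that a state with a large Clifford stabilizer lands in the required region: for product choices like $\psi_\star=\ket{T}^{\otimes 2}$, the single-copy Triangle boundary along that ray may well coincide with the stabilizer-polytope boundary (since projecting out one qubit already yields a noisy $\ket{T}$), leaving no Triangle-invisible magic state on the ray. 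More fundamentally, your Step~3 is where all the content lies, and the proposal gives no mechanism for why $\rho(p_c)^{\otimes 2}$ should violate a four-qubit witness; the paper supplies this only by exhibiting the explicit circuit. So the outline is correct, but to turn it into a proof you would ultimately have to do what the paper does: produce a concrete state and a concrete two-copy circuit and verify both conditions directly.
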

This conclusion indicates that genuinely multi-qubit magic distillation protocols are strictly more powerful than single-qubit distillation protocols, thereby resolving the open problem posed in Ref.~\cite{reichardt2006quantum}. 
Here, a single-qubit (multi-qubit) magic distillation protocol takes many single-qubit (multi-qubit) magic states as input and outputs a state close to some pure single-qubit magic state, such as the $T$ state defined as $\ket{T}\propto\ket{0}+e^{\mathrm{i}\pi/4}\ket{1}$.
We establish this result by numerically searching over all two-qubit states that are not detected by the two-qubit Triangle Criterion but are detected by the four-qubit Triangle Criterion with two copies.
By optimizing the violation of the four-qubit Triangle Criterion, we find a two-qubit state such that two copies of this state can be converted into a single-qubit magic state, which can then be further distilled into a $T$ state~\cite{bravyi2005universal}.
The explicit state we find and the corresponding distillation procedure are presented in \SM{}~\ref{app:single_distillation}.

Having seen that single-qubit magic distillation protocols can fail for some multi-qubit states, a natural question is how often single-qubit protocols are actually useful.
For a given multi-qubit state, a single-qubit distillation protocol can be useful only if the state can be reduced to a single-qubit magic state by stabilizer operations (the converse would rely on resolving the open problem of the existence of bound magic under arbitrary numbers of copies~\cite{campbell2010bound}).
Since Theorem~\ref{thm:bound_magic} shows that the Triangle Criterion is a necessary and sufficient condition for a multi-qubit state to be reducible to a single-qubit magic state, we can sample states according to some mixed-state distribution and compute the probability that they are detected by the Triangle Criterion.

Hereafter, we will adopt the following mixed-state distribution to analyze the probabilities of multi-qubit magic detection and distillation~\cite{zyczkowski2001induced}, which is defined naturally by taking the partial trace of the uniform distribution of pure states:
\begin{definition}
A $d$-dimensional mixed state $\rho$ is sampled from the distribution $\pi_{d,k}$ when $\rho=\Tr_k(\ketbra{\Psi}{\Psi})$ where $\ket{\Psi}$ a $d\times k$-dimensional pure state sampled from the Haar measure.
\end{definition}
\noindent Leveraging tools developed in Ref.~\cite{liu2022detectability}, we prove:
\begin{theorem}\label{thm:single_utility}
Given a $d=2^n$-dimensional state $\rho$ sampled according to the distribution of $\pi_{d,k}$, the probability that it can be distilled into single-qubit $T$\,state by some single-qubit magic distillation protocol is upper bounded by $\exp{c_1n^2-c_2k}$, where $c_1$ and $c_2$ are constants.
\end{theorem}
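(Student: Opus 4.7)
The plan is to combine Theorem~\ref{thm:bound_magic}, the reformulation of the Triangle Criterion as a finite family of low-rank witnesses, a union bound, and a chi-squared concentration estimate for induced random states.

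By Theorem~\ref{thm:bound_magic}, $\rho$ is usable by some single-qubit $T$-distillation protocol if and only if it can be stabilizer-reduced to a single-qubit magic state, which in turn is equivalent to being detected by the Triangle Criterion. By the witness formulation from \SM{}~\ref{app:number}, detection is equivalent to the existence of indices $(i,j,k)$ with $\Tr(W_{ijk}\rho)<0$ among a family of at most $N\le 2^{\mathcal O(n^2)}$ rank-$\le 3$ Hermitian witnesses $W_{ijk}=\psi_i+\psi_j-\psi_k$. A union bound therefore reduces the theorem to a single-witness estimate $\Pr_{\rho\sim\pi_{d,k}}[\Tr(W\rho)<0]\le e^{-c_2 k}$ that must hold uniformly in the witness $W$ and in $n$.

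For the per-witness step I would use the representation $\rho=GG^\dagger/\Tr(GG^\dagger)$ with $G$ a $d\times k$ matrix of i.i.d.\ complex standard Gaussians. Because $\Tr(GG^\dagger)>0$ almost surely, the sign of $\Tr(W\rho)$ equals that of the numerator $\Tr(WGG^\dagger)=\sum_\alpha w_\alpha X_\alpha$, where $w_\alpha$ are the eigenvalues of the Hermitian $W$ and, by unitary invariance of $G$, $X_\alpha=\sum_{j=1}^k|g_{\alpha j}|^2$ are i.i.d.\ copies of $\chi^2_{2k}/2$ with mean $k$. The structural fact powering the whole argument is that the triangle overlap condition $\Tr(\psi_i\psi_j)=1/2$ keeps the chi-squared weight budget $d$-independent,
\begin{equation}
\sum_\alpha w_\alpha^2=\Tr(W^2)=3+2\bigl(\tfrac12-\tfrac12-\tfrac12\bigr)=2,
\end{equation}
and forces $\|W\|_{\mathrm{op}}\le\sqrt{2}$. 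A Bernstein-type tail bound for weighted chi-squared sums (as in Ref.~\cite{liu2022detectability}), applied to the deviation of $\sum_\alpha w_\alpha X_\alpha$ by $k$ below its mean $k\Tr(W)=k$, then yields the required $e^{-c_2 k}$ with an absolute constant $c_2>0$.

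Multiplying the union-bound factor $N\le e^{c_1 n^2}$ by the per-witness bound gives the claimed $\exp(c_1 n^2-c_2 k)$. The main technical obstacle is the uniform-in-$n$ concentration: had either $\Tr(W^2)$ or $\|W\|_{\mathrm{op}}$ scaled with $d=2^n$, the Bernstein exponent would degrade to $-c_2 k/\mathrm{poly}(d)$ and render the union bound vacuous. The $d$-independence enforced by the triangle overlap condition is therefore essential, and the bulk of the technical work amounts to stating the correct weighted-$\chi^2$ concentration lemma (or invoking the one supplied by Ref.~\cite{liu2022detectability}).
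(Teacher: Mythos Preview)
Your proposal is correct and follows essentially the same approach as the paper: reduce single-qubit distillability to Triangle-Criterion detection via Theorem~\ref{thm:bound_magic}, count the $2^{\mathcal O(n^2)}$ witnesses $W_{ijk}$, bound each per-witness failure probability using the concentration result of Ref.~\cite{liu2022detectability} together with the $d$-independent ratio $\Tr(W)/\sqrt{\Tr(W^2)}=1/\sqrt{2}$, and finish with a union bound. The only cosmetic difference is that the paper invokes the concentration inequality of Ref.~\cite{liu2022detectability} as a black box (its Fact~\ref{fact:witness}), whereas you sketch its Gaussian/weighted-$\chi^2$ proof; your emphasis on why the $d$-independence of $\Tr(W^2)$ and $\|W\|_{\mathrm{op}}$ is what makes the union bound survive is spot on.
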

\noindent Since $k$ upper-bounds the rank of $\rho$, it implies that when the rank of a random mixed state reaches the order of $n^2$, single-qubit distillation fails with overwhelming probability. In stark contrast, magic can persist even for $k$ on the order of $2^n$~\cite{bansal2025pseudo}, exhibiting an exponential gap between the existence of magic and its distillability by single-qubit protocols.

\section{Minimal purity}
In entanglement theory, an important conclusion is that the minimal purity of entangled $d$-dimensional state is arbitrarily close to $\frac{1}{d-1}$, which plays vital roles in entanglement detection~\cite{Barreiro2010multipartite} and geometric analysis~\cite{aubrun2012phase}.
The upper bound is given by the PPT criterion while it has also been shown to be the lower bound in Ref.~\cite{Gurvits2002largest}.
Despite the distinct structures associated with entanglement and magic theories, we use the Triangle Criterion to establish a remarkably similar characteristic:
\begin{theorem}\label{thm:minimal_purity}
There exists a multi-qubit magic state with purity arbitrarily close to $\frac{1}{d-1/2}$.
\end{theorem}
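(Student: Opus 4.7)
The plan is to give an explicit construction: an $n$-qubit state whose magic is concentrated in a two-dimensional stabilizer subspace, with the remaining weight spread as the maximally mixed state on the orthogonal complement. Optimally balancing these two contributions delivers purity $1/(d-1/2)$ in the limit of a barely-magic single-qubit seed.

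First, I would fix three single-qubit stabilizer states $\ket{v_1}=\ket{+}$, $\ket{v_2}=\ket{-i}$, $\ket{v_3}=\ket{1}$, whose Bloch vectors satisfy $\vec{r}_1-\vec{r}_2-\vec{r}_3=(1,1,1)$ with pairwise overlap $1/2$, and promote them to $n$-qubit stabilizer pure states $\psi_i=\ketbra{v_i}{v_i}\otimes\ketbra{0^{n-1}}{0^{n-1}}$. Next, pick a single-qubit state $\sigma_1$ with Bloch vector $(1+\epsilon)(1,1,1)/3$ for small $\epsilon>0$; an elementary computation gives $\Tr(\sigma_1^2)=2/3+O(\epsilon)$ and
\begin{equation*}
\Tr(\sigma_1\ketbra{v_1}{v_1})-\Tr(\sigma_1\ketbra{v_2}{v_2})-\Tr(\sigma_1\ketbra{v_3}{v_3})=\epsilon/2,
\end{equation*}
so $\sigma_1$ sits just outside the stabilizer octahedron in the $(1,1,1)$ direction.

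Then, I would define the candidate state
\begin{equation*}
\rho_\epsilon=\frac{3}{2d-1}\,\sigma_1\otimes\ketbra{0^{n-1}}{0^{n-1}}+\frac{2}{2d-1}\bigl(I-I_1\otimes\ketbra{0^{n-1}}{0^{n-1}}\bigr),
\end{equation*}
where $I$ is the $n$-qubit identity and $I_1$ the single-qubit identity. Both summands are positive semidefinite and the trace equals $1$. Because the second summand annihilates every $\ket{v_i}\otimes\ket{0^{n-1}}$, one gets $\Tr(\rho_\epsilon\psi_i)=\frac{3}{2d-1}\Tr(\sigma_1\ketbra{v_i}{v_i})$; the Triangle Criterion is therefore violated by $3\epsilon/[2(2d-1)]>0$. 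A short computation also yields $\Tr(\rho_\epsilon^2)=2/(2d-1)+O(\epsilon)=1/(d-1/2)+O(\epsilon)$, so sending $\epsilon\to 0^+$ produces magic states with purity arbitrarily close to $1/(d-1/2)$.

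The main step is not really a technical difficulty but identifying the correct ansatz. Once the problem is reduced to a single-qubit triangle on the first qubit (with $\ket{0^{n-1}}$ on the remainder), the weights $3/(2d-1)$ and $2/(2d-1)$ are pinned by minimizing the elementary quadratic $p^2\cdot 2/3+(1-p)^2/(d-2)$, whose minimizer $p^\ast=3/(2d-1)$ evaluates to exactly $1/(d-1/2)$. The ansatz itself is essentially forced by the operational content of Theorem~\ref{thm:bound_magic}: any state reducible to a single-qubit magic state by stabilizer operations must carry its magic on some two-dimensional stabilizer subspace, and the construction above is the most dilute such carrier.
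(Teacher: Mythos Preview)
Your construction is correct and, at the $\epsilon=0$ limit, coincides exactly with the paper's state: expanding your $\rho_0$ gives $\frac{1}{d-1/2}I+\frac{1}{2(2d-1)}(-I_1+X+Y+Z)\otimes\ketbra{0^{n-1}}{0^{n-1}}$, which is the paper's representative state up to a single-qubit Clifford rotation. The difference lies in the verification step. The paper places the $\epsilon=0$ state on the boundary of the \emph{stabilizer polytope} by checking $\Tr(\rho\sigma)\le\Tr(\rho^2)$ for all stabilizer $\sigma$, and then invokes strict positivity to argue that a magic state sits arbitrarily close. You instead perturb first and certify magic directly via the Triangle Criterion, which is cleaner for Theorem~\ref{thm:minimal_purity} alone since it avoids the enumeration over stabilizer $\sigma$. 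The paper's extra work, however, is not wasted: knowing that the limiting state lies on the stabilizer boundary (not merely on the Triangle hyperplane) is what feeds into Conjecture~\ref{conj:minimal_purity} and the monotonicity statement of Theorem~\ref{thm: a_n non-increasing}. So your route buys a shorter proof of the theorem as stated, while the paper's route buys auxiliary structural information used elsewhere.
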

This theorem is proved by exhibiting a state on the surface of the polytope defined by the Triangle Criterion, i.e., with $\Tr(W_{ijk}\rho)=0$, that has minimal purity.
A representative state is
\begin{align*}
\rho=&\frac{1}{d-1/2}I^{\otimes n}+\frac{1/2}{d-1/2}(\ketbra{0}{0}\\
&+\ketbra{+}{+}+\ketbra{+i}{+i}-2I)\otimes\ketbra{0^{n-1}}{0^{n-1}}\,.\numberthis
\end{align*}
We prove in \SM{}~\ref{app:purity} that this state actually lies on the boundary of the stabilizer polytope, in agreement with the conclusion of Ref.~\cite{okay2021extremal}.
Hence, if we move this state towards the maximally mixed state in order to further reduce its purity, it will eventually enter the stabilizer polytope and become a stabilizer state.
At the same time, one can check that this density matrix is strictly positive.
Therefore, it is not on the boundary of the state space, and there exists a magic state outside the stabilizer polytope that is arbitrarily close to it.

We conjecture that: 
\begin{conjecture}\label{conj:minimal_purity}
Any $d$-dimensional multi-qubit state $\rho$ with purity less equal than $\frac{1}{d-1/2}$ can be written as a mixture of stabilizer pure states.
\end{conjecture}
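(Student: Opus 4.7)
The plan is to recast the purity hypothesis as a Hilbert--Schmidt ball condition around the maximally mixed state $I/d$, and then reduce the conjecture to a universal inequality on facet-defining magic witnesses of the stabilizer polytope.

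First I would use the identity $\Tr(\rho^2) = \|\rho - I/d\|_2^2 + 1/d$ to rewrite the purity hypothesis as $\|\rho - I/d\|_2 \le r$ with $r = 1/\sqrt{2d(d-1/2)}$. For this value of $r$, the smallest eigenvalue of any trace-one Hermitian operator in the ball is at least $1/d - r \ge 0$, so the entire ball lies inside the state space and positivity is automatic; the ball is thus a genuine subset of density operators.

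Next, writing the stabilizer polytope as an intersection of half-spaces $\Tr(W\rho) \ge 0$ over all extremal magic witnesses $W$ in the dual cone, I would decompose $W = (\Tr W/d)\, I + W_0$ with $W_0$ traceless and apply Cauchy--Schwarz to the perturbation term:
\[
\Tr(W\rho) = \frac{\Tr W}{d} + \Tr\bigl(W_0(\rho - I/d)\bigr) \ge \frac{\Tr W}{d} - \|W_0\|_2\,\|\rho - I/d\|_2.
\]
The ball of radius $r$ therefore lies inside this half-space iff $\|W_0\|_2 \le \sqrt{2 - 1/d}\,\Tr(W)$, which, using $\|W\|_2^2 = \|W_0\|_2^2 + (\Tr W)^2/d$, simplifies to
\[
\|W\|_2 \le \sqrt{2}\,\Tr(W).
\]
Hence the conjecture reduces to establishing this universal bound on every facet witness of the multi-qubit stabilizer polytope. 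Direct computation shows it is saturated by the Triangle-Criterion witnesses $W_U$ of \eqref{eq:magic_witness}: $\Tr W = 1$ and $\|W\|_2^2 = 2$, using $\Tr(\psi_i\psi_j) = 1/2$, which is consistent with the extremal state of Theorem~\ref{thm:minimal_purity}. By the triangle inequality for $\|\cdot\|_2$ and additivity of the trace, the bound is preserved under conic combinations, so it suffices to verify it on extreme rays of the witness cone.

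The main obstacle is exactly this last step: showing $\|W\|_2 \le \sqrt{2}\,\Tr(W)$ for every extremal stabilizer-positive operator $W$ on $n$ qubits. The facet structure of the multi-qubit stabilizer polytope is notoriously intricate already at $n=2$ and lacks a closed-form classification for larger $n$. A natural route is to combine the characterization of extremal stabilizer-positive operators from Ref.~\cite{okay2021extremal} with a Pauli-weight expansion of $W$, bounding $\Tr(W^2)$ by the pairwise overlaps between the stabilizer projectors appearing in an extremal decomposition; the Triangle case suggests that the combinatorics of $\pm$ signs controls the cross terms tightly. An alternative is to exploit Clifford covariance to restrict the analysis to a finite set of facet orbit representatives, reducing the problem to a tractable family of inequalities. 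A successful verification along either route would pin the inradius of the stabilizer polytope exactly to the Triangle facets and thereby prove the conjecture.
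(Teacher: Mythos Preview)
The statement you are attempting is a \emph{conjecture} in the paper, not a theorem: the authors do not prove it. The only rigorous result in this direction is the weaker bound that every state with purity below $\tfrac{1}{d-1/d}$ is a stabilizer mixture (Appendix~\ref{app:purity}), obtained by a simple Pauli-decomposition argument and, independently, via the complementarity polytope of mutually unbiased bases. The sharp value $\tfrac{1}{d-1/2}$ is supported only by full facet enumeration for $n\le 2$ and numerics up to $n=4$.

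Your reduction is correct and, in fact, coincides with a reformulation the paper records elsewhere: in Appendix~\ref{app:detectability} the authors prove the \emph{converse} implication, namely that Conjecture~\ref{conj:minimal_purity} forces every magic witness to satisfy $\Tr(W)/\|W\|_2\ge 1/\sqrt{2}$. So you have correctly identified that the conjecture is equivalent to the universal facet inequality $\|W\|_2\le\sqrt{2}\,\Tr(W)$, i.e.\ that the Hilbert--Schmidt inradius of the stabilizer polytope is attained on the Triangle facets. The obstacle you single out---bounding $\|W\|_2$ for \emph{all} extremal stabilizer-positive operators, whose classification is unavailable beyond $n=2$---is precisely the reason the statement remains a conjecture; neither of your proposed routes (the extremal characterization of Ref.~\cite{okay2021extremal} or Clifford orbit reduction) is executed in the paper. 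Your proposal therefore does not so much contain a gap as correctly localize the open problem; there is no complete proof in the paper to compare against.
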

\noindent In single and two-qubit cases, we can numerically enumerate all different surfaces of stabilizer polytopes to test the correctness of our conjecture.
For two qubits, the stabilizer polytope has eight different kinds of surfaces which cannot be transformed into each other with Clifford unitaries~\cite{reichardt2006quantum,junior2025geometric}.
Among them, one kind of surface is equivalent with the witness of Triangle Criterion and gives the purity of $\frac{1}{d-1/2}=\frac{2}{7}$.
All the other seven surfaces have larger purities.
For even more qubits, we have numerical evidence for the conjecture up to $n=4$.
It is worth mentioning that, although $\frac{1}{d-1/2}$ is close to $\frac{1}{d}$, the constant $1/2$ is crucial: it leads to the fundamental limitation in magic detection shown in Theorem~\ref{thm:linear}, whereas this limitation would no longer hold if the constant were exponentially small in $n$.

An intriguing corollary of this conjecture is the existence of absolute stabilizer states.
We prove in \SM{}~\ref{app:absolute} that states with purity $\frac{1}{d-c}$ with $c$ being some constant are stable under measurement post-selection:
\begin{theorem}\label{thm:purity_reduction}
Given an arbitrary $d=d_A\times d_B$-dimensional bipartite state $\rho_{AB}$ with purity $\frac{1}{d-c}$ with $0\le c\le d_A-1$, after arbitrary unitary performed on $AB$ and arbitrary measurement and post-selection performed on $B$, the maximal purity of the reduced density matrix on subsystem $A$ is $\frac{1}{d_A-c}$. 
\end{theorem}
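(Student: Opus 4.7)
The plan is to translate the operational statement into an abstract matrix inequality and then resolve it via a short Cauchy--Schwarz estimate followed by a one-variable optimisation. By convexity of purity, refining any POVM into its rank-one components can only increase the best post-selected purity; hence, without loss of generality, the measurement on $B$ is a projection onto a pure state $\ket{\phi}_B$. The Kraus operator $K := (I_A \otimes \bra{\phi}_B) U$ then satisfies $K K^\dagger = I_A$, while $P := K^\dagger K = U^\dagger(I_A \otimes \ketbra{\phi}{\phi}_B) U$ is a rank-$d_A$ orthogonal projector on $\mathcal{H}_{AB}$. The normalised post-selected reduced state fulfils $\Tr(\rho_A^2) = \Tr[(P\rho_{AB} P)^2]/\Tr(P\rho_{AB})^2$, so the theorem is equivalent to the abstract bound
\begin{equation}
    (d_A - c)\,\Tr[(P\rho P)^2] \le \Tr(P\rho)^2
\end{equation}
for every rank-$d_A$ orthogonal projection $P$ and every state $\rho$ with $\Tr(\rho^2) = 1/(d-c)$.

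Next, writing $Q = I - P$, the block orthogonality identity $\Tr(\rho^2) = \Tr[(P\rho P)^2] + 2\|P\rho Q\|_2^2 + \Tr[(Q\rho Q)^2]$, combined with the Cauchy--Schwarz bound $\Tr[(Q\rho Q)^2] \ge \Tr(Q\rho)^2/(d-d_A)$ (applied to the positive operator $Q\rho Q$ on the $(d-d_A)$-dimensional range of $Q$), yields
\begin{equation}
    \Tr[(P\rho P)^2] \le \frac{1}{d-c} - \frac{(1-t)^2}{d-d_A}, \qquad t := \Tr(P\rho).
\end{equation}
Clearing denominators, the required inequality becomes $(d_A-c)(d-d_A) \le (d-c)\bigl[(d-d_A)t^2 + (d_A-c)(1-t)^2\bigr]$; the right-hand side is an upward parabola in $t$ minimised at $t^\star = (d_A-c)/(d-c)$ with minimum value exactly $(d_A-c)(d-d_A)$, so the bound holds for every admissible $t$ and is tight at $t^\star$.

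Saturation is explicitly witnessed by $\rho = \frac{1}{d-c}(I - \Pi)$ with any rank-$c$ projector $\Pi$ contained in the range of $P$: this state has purity $1/(d-c)$ and $P\rho P/\Tr(P\rho P) = (P-\Pi)/(d_A-c)$ has purity precisely $1/(d_A-c)$. The main conceptual hurdle is the reduction in the first step, where one must verify that composing a global unitary with an arbitrary measurement and post-selection on $B$ produces exactly the family of rank-$d_A$ orthogonal projections on $\mathcal{H}_{AB}$; once this dictionary is in place, only the elementary Cauchy--Schwarz estimate and the one-variable optimisation remain.
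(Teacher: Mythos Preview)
Your proof is correct and follows essentially the same route as the paper: reduce to a rank-one measurement on $B$, view it as a rank-$d_A$ projection $P$, bound $\Tr[(P\rho P)^2]$ via the block identity together with the Cauchy--Schwarz estimate $\Tr[(Q\rho Q)^2]\ge (1-t)^2/(d-d_A)$, and then optimise the resulting one-variable expression, which is extremised at $t=(d_A-c)/(d-c)$. Your abstract framing in terms of arbitrary rank-$d_A$ projections and the explicit saturation witness are nice additions beyond the paper's argument (note that your witness as written requires integer $c$, though the bound is in fact tight for all $0\le c\le d_A-1$).
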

\noindent This means that, for any resource, including entanglement and magic, if the minimal purity of resourceful state has the form of $\frac{1}{d-c}$, then there exists an \emph{absolutely non-resourceful} ball around the maximally mixed state such that given any state in the ball, any operation including unitary rotation and measurement with post-selection of a subsystem cannot make the reduced state resourceful.
To avoid misunderstanding, we emphasize that Theorem~\ref{thm:purity_reduction}, together with Theorem~\ref{thm:unfaithful_magic} introduced later, does not rely on Conjecture~\ref{conj:minimal_purity}. Theorem~\ref{thm:linear} is the only result that depends on this conjecture.

\section{Multi-qubit magic detection}
Witnesses are a powerful and experimentally attractive tool for resource detection, corresponding to hyperplanes that separate free and resourceful states.
A magic (entanglement) witness is an observable whose expectation values are positive for all stabilizer (separable) states while negative for some magic (entangled) states.
According to the hyperplane separation theorem, whenever the set of free states is convex, witnesses are complete for resource detection, namely, for any resourceful state, there exists a witness that can detect it. This conclusion naturally applies to magic detection~\cite{leone2026unbearable}. However, this is only an existential statement, and the construction of such witnesses might highly dependent on the given state and might require using general observables without any further structure.
In this section, instead of asking whether some witness exists for a given magic state, we study the detection power and limitations of fixed witnesses, witnesses with specific structures, and more general linear detection protocols built from finitely many observables.

A commonly-used class of witnesses is based on fidelity, $W = \alpha I - \ketbra{\psi}{\psi}$, built on the simple idea that a resourceful state should have high fidelity with some resourceful pure state~\cite{pan2012multiphoton}.
For entanglement, $\ket{\psi}$ is typically chosen to be a maximally entangled state and $\alpha$ is determined by the Schmidt coefficients of $\ket{\psi}$.
For magic, $\ket{\psi}$ can be taken as a highly magical pure state and $\alpha$ as its maximal fidelity with a stabilizer pure state.
In entanglement detection, it has been shown that there exist unfaithful entangled states that cannot be detected by any fidelity-type witness~\cite{weilenmann2020faithful,gunhe2021faithful}, revealing that some entangled states are far from all entangled pure states.
We find such a phenomenon also for magic and  construct an unfaithful magic state using Theorem~\ref{thm:minimal_purity} in \SM{}~\ref{app:faithful}.
\begin{theorem}\label{thm:unfaithful_magic}
There exist magic states that cannot be detected by any magic witness of the form $\alpha I-\ketbra{\psi}{\psi}$.
\end{theorem}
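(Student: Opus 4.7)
The plan is to construct an unfaithful magic state (for $n\ge 2$) by perturbing the minimum-purity state
\begin{equation*}
\rho_0=\frac{1}{d-1/2}I + \frac{1}{2(d-1/2)}(\ketbra{0}{0}+\ketbra{+}{+}+\ketbra{+i}{+i}-2I)\otimes\ketbra{0^{n-1}}{0^{n-1}}
\end{equation*}
of Theorem~\ref{thm:minimal_purity} slightly outside the stabilizer polytope. A fidelity-type witness $W=\alpha I-\ketbra{\psi}{\psi}$ is a valid magic witness only when $\alpha\ge\max_{\phi\in\mathrm{STAB}}|\braket{\phi}{\psi}|^2$, and detects $\rho$ only when $\bra{\psi}\rho\ket{\psi}>\alpha$. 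So it suffices to exhibit a magic $\rho$ satisfying $\bra{\psi}\rho\ket{\psi}\le\max_{\phi\in\mathrm{STAB}}|\braket{\phi}{\psi}|^2$ for every pure $\ket{\psi}$.

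For $\rho_0$ I would first read off the spectrum directly: on the $(d-2)$-dimensional subspace orthogonal to $\mathrm{span}\{\ket{0},\ket{1}\}\otimes\ket{0^{n-1}}$ it acts as $\frac{2}{2d-1}I$, and on the remaining two-dimensional block the eigenvalues are $\frac{3\pm\sqrt{3}}{2(2d-1)}$, giving $\bra{\psi}\rho_0\ket{\psi}\le\lambda_{\max}(\rho_0)=\frac{3+\sqrt{3}}{2(2d-1)}$. For the stabilizer-fidelity lower bound, I would expand $\ket{\psi}=\sum_{x\in\{0,1\}^{n-1}}\ket{\psi_x}\otimes\ket{x}$ in the computational basis of qubits $2,\dots,n$, pick $x^*$ maximizing $p_x:=\|\psi_x\|^2$ (so $p_{x^*}\ge 2/d$ by pigeonhole), and pair it with the single-qubit stabilizer state $\ket{\phi^*}$ nearest to $\ket{\psi_{x^*}}/\sqrt{p_{x^*}}$. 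Combined with the elementary fact that every single-qubit pure state has stabilizer fidelity at least $\frac{3+\sqrt{3}}{6}$ (attained at the $H$-type Bloch vectors $(\pm 1,\pm 1,\pm 1)/\sqrt{3}$), this yields $\max_{\phi\in\mathrm{STAB}}|\braket{\phi}{\psi}|^2\ge p_{x^*}\cdot\frac{3+\sqrt{3}}{6}\ge\frac{3+\sqrt{3}}{3d}$.

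Comparing the two bounds, $\frac{3+\sqrt{3}}{3d}\ge\frac{3+\sqrt{3}}{2(2d-1)}$ rearranges to $d\ge 2$ and is strict as soon as $d\ge 4$, i.e., $n\ge 2$. So the desired inequality $\bra{\psi}\rho_0\ket{\psi}\le\max_{\phi\in\mathrm{STAB}}|\braket{\phi}{\psi}|^2$ holds with uniform positive slack over all $\ket{\psi}$, and by continuity it survives any sufficiently small perturbation of $\rho_0$. Since Theorem~\ref{thm:minimal_purity} places $\rho_0$ on the boundary of the stabilizer polytope while keeping it strictly positive in the state space, nudging it slightly outward produces a genuine magic state detected by no fidelity-type witness. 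The hard part is the stabilizer-fidelity lower bound: we restrict to product stabilizer states $\ket{\phi_1}\otimes\ket{x}$, a tiny fraction of the full stabilizer set, yet the coarse bound obtained must still beat $\lambda_{\max}(\rho_0)$; the numerics line up only because the minimal-purity constant in Theorem~\ref{thm:minimal_purity} is $\frac{1}{d-1/2}$ rather than $\frac{1}{d}$, the very point emphasized just after Conjecture~\ref{conj:minimal_purity}.
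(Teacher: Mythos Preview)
Your proposal is correct and follows the same high-level strategy as the paper: take the minimum-purity boundary state $\rho_0$ from Theorem~\ref{thm:minimal_purity}, show that $\lambda_{\max}(\rho_0)$ lies strictly below a uniform lower bound on the stabilizer fidelity $\alpha(\psi)=\max_{\phi\in\mathrm{STAB}}|\braket{\phi}{\psi}|^2$, and perturb outward. The genuine difference is in how the stabilizer-fidelity lower bound is obtained. The paper invokes the fact that pure stabilizer states form a projective $3$-design: from the second and third moments of $x_i=|\braket{\phi_i}{\psi}|^2$ one gets $\alpha(\psi)\ge\mathbb{E}_i x_i^3/\mathbb{E}_i x_i^2=\tfrac{3}{d+2}$ uniformly in $\psi$. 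You instead use an elementary, constructive argument restricted to product stabilizer states $\ket{\phi_1}\otimes\ket{x}$: pigeonhole on the computational basis of qubits $2,\dots,n$ gives a slice of weight $\ge 2/d$, and the single-qubit Bloch-sphere geometry gives the factor $(3+\sqrt{3})/6$, for a combined bound $\alpha(\psi)\ge\tfrac{3+\sqrt{3}}{3d}$. The paper's bound is numerically sharper (roughly $3/d$ versus $1.58/d$) and applies to \emph{any} state of purity $\tfrac{1}{d-1/2}$, not just the specific $\rho_0$; your bound is weaker but entirely self-contained, needing no design property. On the other side of the inequality, you compute the spectrum of $\rho_0$ exactly, whereas the paper passes through the decomposition $\rho_0=I/d+t\sigma$ and a bound on $\lambda_{\max}(\sigma)$; your direct computation is arguably cleaner here. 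Both routes close only for $d>2$, i.e.\ $n\ge 2$, which is also where the paper's argument lands.
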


Actually, magic detection is not only hard using fidelity-type witnesses.
If Conjecture~\ref{conj:minimal_purity} holds, then magic detection is generally subject to fundamental limitations, which would hold for all linear protocols even beyond magic witnesses.
Here, by a linear protocol we mean that one decides whether the target state $\rho$ is a magic state or not from the expectation values of $M$ different observables, $\{\Tr(O_i\rho)\}_{i=1}^M$.
We prove in \SM{}~\ref{app:detectability} that for mixed states sampled with $\pi_{d,k}$, a number of $\tilde{\Omega}(k)$ linear observables is required to detect magic with constant probability~\cite{liu2022detectability}.
The notation $\tilde{\Omega}(f(n))$  is an asymptotic lower bound that ignores polylogarithmic factors.
Note that $k$ is the dimension of the traced-out system, which can be exponentially large in practical scenarios.
\begin{theorem}\label{thm:linear}
If Conjecture~\ref{conj:minimal_purity} is correct, then there exists a constant $c_1>0$ such that for any magic witness, the probability of successfully detecting magic for a mixed state sampled from $\pi_{d,k}$ is upper bounded by $\exp{-c_1 k}$.
Moreover, there exists a constant $c_2>0$ such that the probability that the magic of the state can be detected from the expectation values of $M$ different observables is upper bounded by $\exp{M\ln(4\sqrt{M}d)-c_2 k}$.
\end{theorem}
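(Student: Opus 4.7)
The plan is to convert Conjecture~\ref{conj:minimal_purity} into a fat stabilizer ball around the maximally mixed state $I/d$, extract a structural inequality on every magic witness, and show that Haar-random $\rho\sim\pi_{d,k}$ almost never violates it. Set $r=\sqrt{1/(d-1/2)-1/d}=1/\sqrt{d(2d-1)}$; any traceless Hermitian $\Delta$ with $\|\Delta\|_F\leq r$ yields a density operator $I/d+\Delta$ of purity at most $1/(d-1/2)$, and since $r<1/d$ the operator stays positive, so by the conjecture it is a stabilizer mixture. Decomposing $W=(\Tr W/d)I+W'$ with $\Tr(W')=0$ and imposing $\Tr[W(I/d+\Delta)]\geq 0$ for every admissible $\Delta$ forces the key inequality $\Tr(W)/d\geq r\,\|W'\|_F$.

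For the single-witness bound, I write $\rho=\Tr_k|\Psi\rangle\langle\Psi|$ with $|\Psi\rangle$ Haar on $\mathbb{C}^{dk}$; a direct Haar second-moment computation using $\mathbb{E}[|\Psi\rangle\langle\Psi|^{\otimes 2}]\propto I+\mathrm{SWAP}$ gives $\mathbb{E}[\Tr(W'\rho)^2]=\|W'\|_F^2/(d(dk+1))$, i.e.\ standard deviation of order $\|W'\|_F/(d\sqrt{k})$. The Hanson--Wright-type concentration for quadratic forms in Haar vectors developed in Ref.~\cite{liu2022detectability} upgrades this to a sub-Gaussian tail $\Pr[|\Tr(W'(\rho-I/d))|>t\|W'\|_F/d]\leq 2\exp(-ckt^2)$. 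Detection of $\rho$ by $W$ demands $\Tr(W'(\rho-I/d))<-r\|W'\|_F$, equivalently $t=rd=\sqrt{d/(2d-1)}$, an order-one constant independent of $d$, which yields the claimed $\exp(-c_1 k)$ bound.

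For the $M$-observable bound I rescale so $\|O_i\|_\mathrm{op}\leq 1$, whence $\vec v(\rho):=(\Tr(O_i\rho))_i\in[-1,1]^M$. Any decision rule depending only on $\vec v(\rho)$ and rejecting every stabilizer vector must, in particular, reject the $M$-dimensional image of the stabilizer ball above, giving it the same slack on either side as in the geometric step. Covering $[-1,1]^M$ by an $\epsilon$-grid with $\epsilon=1/(2\sqrt{M}\,d)$ produces at most $(4\sqrt{M}d)^M$ linear magic witnesses of the form $W=\alpha I+\sum_i c_i O_i$ that jointly dominate all valid rules up to the concentration scale of the single-witness step, so any valid detection implies detection by a net-witness. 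Applying the previous bound to each one and taking a union bound yields $(4\sqrt{M}d)^M\exp(-c_1 k)=\exp(M\ln(4\sqrt{M}d)-c_2 k)$ with $c_2=c_1$.

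The main obstacle is the sub-Gaussian estimate with variance proxy $\|W'\|_F^2/(d^2 k)$: naive Levy's lemma on $S^{2dk-1}$ only supplies $\|W'\|_\mathrm{op}^2/(dk)$, which for low-rank (e.g.\ fidelity) witnesses would degrade the exponent to $k/d$ rather than $k$. Bridging this gap via the moment/Hanson--Wright machinery of Ref.~\cite{liu2022detectability} is the essential technical step; once it is in hand, the geometric reduction and the $\epsilon$-net argument are routine convex-geometric bookkeeping.
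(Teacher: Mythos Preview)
Your proposal is correct and follows essentially the same route as the paper: convert the conjecture into the inequality $\Tr(W)/d \geq r\,\|W'\|_F$ (equivalently $\Tr(W)/\sqrt{\Tr(W^2)} \geq 1/\sqrt{2}$, which is precisely the lemma proved in \SM~\ref{app:detectability}), and then feed this into the concentration framework of Ref.~\cite{liu2022detectability}. The only difference is packaging: the paper quotes Theorems~2 and~4 of that reference as black-box Facts~\ref{fact:witness} and~\ref{fact:linear_methods}, whereas you sketch their internals via Hanson--Wright concentration and an $\epsilon$-net union bound; the geometric input and the resulting constants match.
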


\section{Discussion}
In this work, we have introduced a powerful mixed-state magic criterion and used it to reveal several new properties of multi-qubit magic distillation and detection. 
We hope that our work will inspire further exploration of mixed-state magic detection and quantification, as well as multi-qubit magic distillation.
Beyond proposing a new tool for magic detection, our work has a promising conceptual conclusion: although entanglement and magic are two resources with very different mathematical formulations, both admit remarkably simple criteria that capture closely analogous structural properties. This parallel may point to a deeper connection between the two prominent quantum resources.
At the same time, our results raise a number of open questions.
The most immediate one concerns the proof of Conjecture~\ref{conj:minimal_purity}~\cite{zurel2026basis}.
In \SM{}~\ref{app:purity}, we rigorously prove that all states with purity less than $\frac{1}{d-1/d}$ must be stabilizer states, with two methods of Pauli decomposition and mutually unbiased bases~\cite{Bengtsson2005ComplementarityPolytope,Zhu2015MUBClifford2Design}. 
We believe that a tighter lower bound can be obtained by generalizing the Pauli-decomposition argument in \SM{}~\ref{app:purity}, where we only use the observation that $I+P$ is a stabilizer mixed state.
By exploiting more detailed structure of stabilizer states, one may use the identity operator to cancel additional Pauli terms and thereby derive a sharper bound.

Theorem~\ref{thm:single_utility} shows that there is a broad range of ranks, from $n^2$ to $2^n$, in which a multi-qubit state still possesses magic but cannot be transformed into a single-qubit magic state via any single-copy protocol. At the same time, Theorem~\ref{thm:activation} shows that this ability to transform the state into a single-qubit magic state can be activated by using multiple copies. A natural and interesting open question is under what conditions a state can always be transformed into a single-qubit magic state given sufficiently many copies. In particular, is this condition equivalent to the presence of multi-qubit magic, in analogy with the activation of genuine multipartite entanglement~\cite{PalazuelosVicente2022GMEActivation} where any state that is not separable across some fixed bipartition can exhibit genuine multipartite entanglement when sufficiently many copies are provided?

Another promising direction is to develop computable non-linear quantities inspired by the Triangle Criterion to approximate its detection power. As discussed above, a main practical limitation of the Triangle Criterion is that, in its current form, it relies on a discrete family of witnesses and is therefore less directly computable than PPT. A natural way to improve this is to replace the exact criterion by suitably designed non-linear functions of the state, for example magic-sensitive polynomial or entropy-like quantities such as stabilizer R\'enyi entropies~\cite{leone2022stabilizer,haug2025efficient}. Such quantities may provide tractable and experimentally accessible proxies for the Triangle Criterion, and may also achieve detection tasks that no single linear witness can accomplish, for example detecting all pure magic states with a single non-linear function. In this sense, they may offer a natural route to overcoming the limitations of linear detection methods established in Theorem~\ref{thm:linear}. At the same time, because the Triangle Criterion has a different mathematical structure from PPT, carrying out such a reduction will likely require new ideas rather than a direct adaptation of existing polynomial-relaxation methods from entanglement detection~\cite{yu2021optimal}.

A further direction is to enhance the detection capability of the Triangle Criterion itself.
For entanglement, the PPT criterion can be systematically strengthened by the symmetry-extension method~\cite{Doherty2004complete}, yielding a necessary and sufficient separability test at the cost of increased computational complexity.
As shown in Theorem~\ref{thm:bound_magic}, the Triangle Criterion is in fact equivalent to first reducing the state to a single-qubit state and then testing its magic.
This perspective naturally suggests a hierarchy of stronger criteria obtained by reducing the state to effective systems with progressively larger numbers of qubits. Such extensions can improve the detection power of the original Triangle Criterion and, at the same time, have a natural operational connection to magic distillation tasks targeting multi-qubit resource states, such as the CCZ state~\cite{shi2002both,Haah2018codesprotocols}. In addition, compared with the original Triangle Criterion, such multi-qubit extensions would naturally involve a broader class of stabilizer states in the construction of the criterion, including stabilizer pairs that are not nearest neighbours and may have mutual fidelity smaller than $1/2$.
Similar to symmetry extensions for the PPT criterion, this stronger detection power is expected to come at the cost of increased computational complexity.

Inspired by entanglement negativity based on the PPT criterion~\cite{VidalWerner2002Computable} and Wigner negativity~\cite{Kenfack2004WignerNegativity,Gross2006HudsonFiniteDim}, we define a magic measure, the \emph{Triangle Negativity}, based on our criterion:
\begin{equation}
    \mathcal{T}(\rho)=\log\!\left(\sum_{ijk}\frac{\vert\Tr(W_{ijk}\rho)\vert}{4(2^n-1)\prod_{\ell=1}^n(2^\ell+1)}\right).
\end{equation}
This measure fulfils the desired properties that $\mathcal{T}(\rho)>0$ if and only if there exists a stabilizer protocol that transforms $\rho$ into a single-qubit magic state, and it is invariant under Clifford unitaries.
Since Theorem~\ref{thm:activation} shows that bound magic in the single-copy regime can be activated using two copies, the Triangle Negativity is not additive.

\begin{acknowledgments}

    We thank Hao Dai, Zhenyu Du, David Gross, Otfried Gühne, Markus Heinrich, Fuchuan Wei, Andreas Winter, Zhenyu Cai, Xuanran Zhu and  Chengkai Zhu for insightful discussions. Q.Y. is supported by the National Natural Science Foundation of China (No. T24B2002). Z.-W.L.\ acknowledges support from NSFC under Grant No.~12475023, Dushi Program, and a startup funding from YMSC.
\end{acknowledgments}

\bibliography{bibliography}

\onecolumngrid
\appendix
\clearpage

\section{Triangle Criterion}~\label{app:triangle}
In this section, we aim to prove the validity of Triangle Criterion and calculate the number of sets containing three neighbouring pure stabilizer states, i.e., $\{\psi_i,\psi_j,\psi_k\}$ with $\Tr(\psi_i\psi_j)=\Tr(\psi_i\psi_k)=\Tr(\psi_j\psi_k)=\frac{1}{2}$. 
They are based on the same observation of stabilizer pure states.

\subsection{Proof}

To prove the validity of the Triangle Criterion, we first establish the following two results.
\begin{lemma}\label{lemma:triangle1}
Let $\psi_1$ and $\psi_2$ be two nearest-neighbour stabilizer states, namely $\Tr(\psi_1\psi_2)=\frac{1}{2}$.
Then for any stabilizer state $\phi$, whenever both overlaps are non-zero, the ratio $\frac{\Tr(\psi_1\phi)}{\Tr(\psi_2\phi)}$ can only take the values $\{\frac{1}{2},1,2\}$.
\end{lemma}

\begin{proof}
A standard fact about stabilizer states is that any two nearest-neighbour stabilizer states can be mapped, by a Clifford unitary, to the canonical pair $\ket{0^n}$ and $\ket{+}\otimes\ket{0^{n-1}}$. 
Since Clifford unitaries preserve the set of stabilizer states as well as all fidelities between them, it suffices to prove the claim for
\begin{equation}
|\psi_1\rangle = |0^n\rangle,\qquad
|\psi_2\rangle = |+\rangle\otimes |0^{n-1}\rangle .
\end{equation}

Now let $|\phi\rangle$ be any $n$-qubit stabilizer state. Using the standard normal form of stabilizer states in the computational basis, one can write
\begin{equation}\label{eq:stab_decomp}
|\phi\rangle
=
2^{-m/2}\sum_{x\in A}a_x |x\rangle,
\end{equation}
where $A\subseteq \mathbb{F}_2^n$ is an affine subspace of the $n$-dimensional binary vector space $\mathbb{F}_2^n$, with $|A|=2^m$, and each coefficient satisfies $|a_x|=1$.

We first compute the overlap with $|\psi_1\rangle=|0^n\rangle$. Clearly, when it is non-zero, $\Tr(\psi_1\phi)=2^{-m}$.
Next, since $\ket{\psi_2}=\frac{1}{\sqrt2}\bigl(|0^n\rangle+|10^{n-1}\rangle\bigr)$, we have $\langle \psi_2|\phi\rangle=\frac{1}{\sqrt2}\bigl(\langle 0^n|\phi\rangle+\langle 10^{n-1}|\phi\rangle\bigr)$
There are three possibilities.

\medskip
\noindent
\textbf{Case 1:} $10^{n-1}\notin A$.  
Then, assuming $\langle \psi_1|\phi\rangle\neq 0$, we must have $0^n\in A$, and therefore $\Tr(\psi_2\phi) = 2^{-m-1}.$
Thus $\frac{\Tr(\psi_1\phi)}{\Tr(\psi_2\phi)}=2$.

\medskip
\noindent
\textbf{Case 2:} $10^{n-1}\in A$.  
Then both $0^n$ and $10^{n-1}$ belong to $A$, and
\begin{equation}
\langle \psi_2|\phi\rangle
=
\frac{2^{-m/2}}{\sqrt2}\bigl(a_{0^n}+a_{10^{n-1}}\bigr).
\end{equation}
For a stabilizer state, the relative phase between two basis elements in the support is always a fourth root of unity, so $a_{10^{n-1}}/a_{0^n}\in\{\pm1,\pm i\}$.
Therefore
\begin{equation}
|a_{0^n}+a_{10^{n-1}}|^2
=
\begin{cases}
4, & a_{10^{n-1}}=a_{0^n},\\
2, & a_{10^{n-1}}=\pm i\,a_{0^n},\\
0, & a_{10^{n-1}}=-a_{0^n}.
\end{cases}
\end{equation}
Hence, whenever $\langle \psi_2|\phi\rangle\neq 0$, $\Tr(\psi_2\phi)=|\langle \psi_2|\phi\rangle|^2\in\{2^{-m+1}, 2^{-m}, 2^{-m-1}\}.$
Since $|\langle \psi_1|\phi\rangle|^2=2^{-m}$ whenever it is non-zero, the ratio $\frac{\Tr(\psi_1\phi)}{\Tr(\psi_2\phi)}$ can only be one of the three values in $\{\frac{1}{2}, 1, 2\}$, which concludes the proof.
\end{proof}

\begin{lemma}\label{lemma:triangle2}
Let $\psi_1$, $\psi_2$, and $\psi_3$ be three nearest-neighbour stabilizer states, namely $\Tr(\psi_i\psi_j)=\frac{1}{2}$ for any $i\neq j$ and $\{i,j\}\in\{1,2,3\}$.
Then for any stabilizer state $\phi$, if $\Tr(\psi_i\phi)=0$, then $\Tr(\psi_j\phi)=\Tr(\psi_k\phi)$.
\end{lemma}
\begin{proof}
The proof follows from a similar logic as the last lemma.
With some Clifford unitary, the three neighbouring stabilizer states can be rotated into $\ket{\psi_1}=\ket{0^n}$, $\ket{\psi_2}=\ket{+}\ket{0^{n-1}}$, and $\ket{\psi_3}=\ket{\pm i}\ket{0^{n-1}}$.
Given another stabilizer state $\phi$ defined in \eqref{eq:stab_decomp}, we can assume $\Tr(\psi_1\phi)=0$ without loss of generality.
This means that $0^{n}$ is not in $A$.
We also consider the two cases mentioned in the last proof.

\medskip
\noindent
\textbf{Case 1:} $10^{n-1}\notin A$.  
Then, we have that $\Tr(\psi_1\phi)=\Tr(\psi_2\phi)=0$.

\medskip
\noindent
\textbf{Case 2:} $10^{n-1}\in A$.  
Then we have $\Tr(\psi_1\phi)=\Tr(\psi_2\phi)=2^{-m-1}$.
\end{proof}

With the results of these two lemmas, we can conclude that for any three pairwise neighboring stabilizer pure states $\psi_1,\psi_2,\psi_3$ and any stabilizer pure state $\phi$, the triple of fidelities can only take one of the following forms, up to permutation:
\begin{equation}
\bigl(\Tr(\psi_1\phi),\Tr(\psi_2\phi),\Tr(\psi_3\phi)\bigr)
\in
2^{-m}\left\{
(1,1,1),\,
\left(1,1,\frac12\right),\,
\left(1,\frac12,\frac12\right),\,
(1,1,0),\,
(0,0,0)
\right\}.
\end{equation}
Hence,
\begin{equation}
\Tr(\psi_i\phi)\le \Tr(\psi_j\phi)+\Tr(\psi_k\phi)
\end{equation}
holds for every permutation $(i,j,k)$ of $(1,2,3)$. Since every mixed stabilizer state is a convex combination of stabilizer pure states, and the above inequality is linear in $\phi$, it follows immediately that the same inequality also holds for all mixed stabilizer states.

\subsection{Number of neighbouring stabilizer sets}\label{app:number}
First, we can fix a stabilizer state $\psi_0$ and calculate the number of neighbouring sets containing it.
Since any two stabilizer states can be transformed into each other with some Clifford unitary, this number is the same for all stabilizer states.
Therefore, the total number of different neighbouring sets is equivalent with the number of stabilizer states times the number of neighbouring sets containing a fixed stabilizer state divided by three.
Without loss of generality, we can fix $\ket{\psi_0}=\ket{0^n}$.
If another state $\psi_1$ satisfies that $\bra{0^n}\psi_1\ket{0^n}=\frac{1}{2}$, it must have the form of
\begin{equation}
\ket{\psi_1}=\frac{1}{\sqrt{2}}\left(\ket{0^n}+a\ket{x}\right),
\end{equation}
where $x \in \{0,1\}^n\setminus\{0^n\}$ is an $n$-bit string and $a$ is a phase taking one of the four different values $\{\pm1,\pm i\}$.
It is easy to prove that, given $\psi_0$ and $\psi_1$, the third state $\psi_2$ can only take the form of 
\begin{equation}
\ket{\psi_2}=\frac{1}{\sqrt{2}}\left(\ket{0^n}+a^\prime\ket{x}\right)
\end{equation}
with $aa^\prime\in\{\pm i\}$.
So, in total, the number of different choices of $\psi_1$ and $\psi_2$ is $4(2^n-1)$ and thus the number of different neighbouring sets is 
\begin{equation}
\frac{4}{3}(2^n-1)\times\#\{\mathrm{Stabilizer \ States}\}=\frac{4}{3}(2^n-1)2^n\prod_{i=1}^n(2^k+1)=2^{\mathcal{O}(n^2)}.
\end{equation}

\section{Magic distillation and generation}
\subsection{Equivalence between two formulations}~\label{app:equivalence}
Here we prove the equivalence between Triangle Criterion and the following characterization.

\begin{lemma}\label{lem:witness}
    Given an $n$-qubit state $\rho$, it can be detected by Triangle Criterion if and only if there exists a Clifford unitary $U$ such that $\Tr(W_U\rho)<0$ with
    \begin{equation}
        W_U=U\left[(\ketbra{+}+\ketbra{+i}-\ketbra{0})\otimes\ketbra{0^{n-1}}{0^{n-1}}\right]U^\dagger.
    \end{equation}
\end{lemma}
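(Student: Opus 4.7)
The plan is to prove Lemma~\ref{lem:witness} by showing both implications. For the easy direction ($\Leftarrow$), I would define
\[
\psi_1 = U(\ketbra{0}\otimes\ketbra{0^{n-1}})U^\dagger,\quad \psi_2 = U(\ketbra{+}\otimes\ketbra{0^{n-1}})U^\dagger,\quad \psi_3 = U(\ketbra{+i}\otimes\ketbra{0^{n-1}})U^\dagger.
\]
These are stabilizer pure states because the Clifford group preserves the set of stabilizer states, and the pairwise overlaps reduce to $|\langle 0|+\rangle|^2=|\langle 0|+i\rangle|^2=|\langle +|+i\rangle|^2=1/2$ since conjugation by $U$ and tensoring with a common ancilla $\ketbra{0^{n-1}}$ preserve inner products. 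The assumption $\Tr(W_U\rho)<0$ then rewrites as $\Tr(\rho\psi_1)>\Tr(\rho\psi_2)+\Tr(\rho\psi_3)$, which is the Triangle Criterion.

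For the harder direction ($\Rightarrow$), the goal is, given three stabilizer states $\{\psi_1,\psi_2,\psi_3\}$ with pairwise overlap $1/2$ satisfying the inequality, to construct a Clifford $U$ that simultaneously rotates them to the canonical triple $\{\ket{0},\ket{+},\ket{+i}\}\otimes\ket{0^{n-1}}$. I would proceed in three steps. First, by the transitivity of the Clifford group on stabilizer pure states, pick a Clifford $V_1$ with $V_1\psi_1 V_1^\dagger = \ketbra{0^n}$. Second, by the structural computation already carried out in Appendix~\ref{app:number}, any stabilizer state $\psi$ with $\Tr(\psi\,\ketbra{0^n})=1/2$ has the form $\frac{1}{\sqrt2}(\ket{0^n}+\alpha\ket{x})$ for a nonzero bit string $x$ and $\alpha\in\{\pm1,\pm i\}$. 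Apply this to $V_1\ket{\psi_2}$: use a Clifford permutation/CNOT network that implements an invertible $\mathbb{F}_2$-linear map on the computational basis taking $x\mapsto 10^{n-1}$ (this map fixes $\ket{0^n}$), and then apply a single-qubit phase gate from $\{I,S,Z,S^\dagger\}$ on the first qubit to send $\alpha\mapsto 1$. Composing with $V_1$ gives a Clifford $V_2$ with $V_2\psi_1 V_2^\dagger=\ketbra{0^n}$ and $V_2\psi_2 V_2^\dagger=\ketbra{+}\otimes\ketbra{0^{n-1}}$.

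Third, invoke the same structural lemma for $V_2\ket{\psi_3}$, now with the additional overlap constraint $|\langle \psi_3|\psi_2\rangle|^2=1/2$. A direct calculation shows that $V_2\ket{\psi_3}=\frac{1}{\sqrt2}(\ket{0^n}+\alpha'\ket{x'})$ must have $x'=10^{n-1}$ and $\alpha'\in\{\pm i\}$, so $V_2\psi_3 V_2^\dagger$ equals either $\ketbra{+i}\otimes\ketbra{0^{n-1}}$ (in which case take $U=V_2^\dagger$ and we are done) or $\ketbra{-i}\otimes\ketbra{0^{n-1}}$. In the latter case I would compose with an extra $S$ on the first qubit: $S$ fixes $\ketbra{0}$ and permutes the equatorial states so that $S\ketbra{+}S^\dagger=\ketbra{+i}$ and $S\ketbra{-i}S^\dagger=\ketbra{+}$. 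This only swaps the images of $\psi_2$ and $\psi_3$, which is harmless because $W_U$ is symmetric in the two positive-weight states. Setting $U=V_2^\dagger S^\dagger$ (respectively $U=V_2^\dagger$) yields $\Tr(W_U\rho)=\Tr(\rho\psi_2)+\Tr(\rho\psi_3)-\Tr(\rho\psi_1)<0$.

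The main obstacle is Step~3: one must verify that the pairwise overlap constraints genuinely force $V_2\ket{\psi_3}$ to lie in the canonical two-element set $\{\ket{+i},\ket{-i}\}\otimes\ket{0^{n-1}}$ with the \emph{same} string $x=10^{n-1}$ as $\psi_2$, rather than some other bit string. This is exactly the content of the counting argument in Appendix~\ref{app:number} (the inner product $\frac{1}{2}(1+\alpha'\delta_{x,x'})$ has modulus squared $1/2$ only when $x=x'$ and $\alpha'\in\{\pm i\}$), so the lemma's proof effectively reuses that combinatorial computation and then invokes Clifford symmetry on the equator of the first qubit to handle the residual sign ambiguity.
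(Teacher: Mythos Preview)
Your proposal is correct and follows essentially the same approach as the paper: both directions use Clifford transitivity on stabilizer states, the structural form $\frac{1}{\sqrt2}(\ket{0^n}+\alpha\ket{x})$ for overlap-$1/2$ neighbours of $\ket{0^n}$ from Appendix~\ref{app:number}, a CNOT network sending $x\mapsto 10^{n-1}$, and a final single-qubit Clifford to align the phases. Your handling of the residual $\pm i$ ambiguity via an extra $S$ gate and the $\psi_2\leftrightarrow\psi_3$ symmetry of $W_U$ is in fact slightly more explicit than the paper's version, which absorbs the swap tacitly by writing $\{\psi_2,\psi_3\}\mapsto\{\ketbra{+}\otimes\ketbra{0^{n-1}},\ketbra{+i}\otimes\ketbra{0^{n-1}}\}$ as sets.
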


\begin{proof}
    We first prove the ``if" part. Suppose $\Tr(W_U\rho)<0$ for some Clifford unitary $U$. Let $\psi_1, \psi_2, \psi_3$ be three stabilizer states obtained by applying $U$ to $\ketbra{0}\otimes \ketbra{0^{n-1}}, \ketbra{+}\otimes \ketbra{0^{n-1}}, \ketbra{+i}\otimes \ketbra{0^{n-1}}$, respectively. Then $\Tr(\psi_1\psi_2)=\Tr(\psi_2\psi_3)=\Tr(\psi_3\psi_1)=1/2$ and $\Tr(\rho\psi_1)>\Tr(\rho\psi_2)+\Tr(\rho\psi_3)$, so $\rho$ can be detected by Theorem~\ref{thm:magic_PPT}.

    Now we prove the ``only if'' part. Suppose $\Tr(\rho\psi_1)>\Tr(\rho\psi_2)+\Tr(\rho\psi_3)$ for three pure stabilizer states $\psi_1, \psi_2, \psi_3$ satisfying $\Tr(\psi_1\psi_2)=\Tr(\psi_2\psi_3)=\Tr(\psi_3\psi_1)=1/2$. It is well-known that there is a Clifford unitary $U_1$ such that $U_1\ket{\psi_1}=\ket{0^n}$. Since unitary transformation preserve fidelity, the squared fidelity between $\ket{0^n}$ and $U_1\ket{\psi_2}$ is $1/2$. According to the statement in the last section, $U_1\ket{\psi_2}$ has the form $\frac{1}{\sqrt{2}}(\ket{0^n}+\alpha\ket{x})$ (up to a global phase), where $x\in\{0,1\}^n\setminus\{0^n\}$ is an $n$-bit string and $\alpha$ is a phase taking one of the four different values $\{\pm1,\pm i\}$. 
    Similarly, we can write $U_1\ket{\psi_3}\propto\frac{1}{\sqrt{2}}(\ket{0^n}+\alpha^\prime\ket{x})$ with $\alpha\alpha^\prime=\pm i$. 
    There exists a CNOT circuit $U_2$ that maps $\ket{x}$ to $\ket{10^{n-1}}$, while keeping $\ket{0^n}$ untouched. Therefore, $U_2U_1$ maps $\psi_1, \psi_2, \psi_3$ to $\ket{0^n}, \frac{1}{\sqrt{2}}(\ket{0}+\alpha\ket{1})\ket{0^{n-1}}, \frac{1}{\sqrt{2}}(\ket{0}+\alpha^\prime\ket{1})\ket{0^{n-1}}$ up to global phases, respectively. 
    Let $U_3$ be a single qubit unitary mapping $\frac{1}{\sqrt{2}}(\ket{0}+\alpha\ket{1})$ and $\frac{1}{\sqrt{2}}(\ket{0}+\alpha^\prime\ket{1})$ to $\ket{+}$ and $\ket{+i}$, respectively.
    Then, $U=U_3U_2U_1$ maps $\psi_1$ to $\ketbra{0^n}$, and maps $\{\psi_2, \psi_3\}$ to $\{\ketbra{+}\otimes \ketbra{0^{n-1}}, \ketbra{+i}\otimes \ketbra{0^{n-1}}\}$. Hence, $\Tr(\rho \psi_1)>\Tr(\rho \psi_2)+\Tr(\rho \psi_3)$ implies $\Tr(W_{U^\dagger}\rho)<0$. 
\end{proof}

\subsection{Triangle Criterion for channel property testing}~\label{app:channel}
Here we present the proof for Corollary~\ref{coro:channel}. We first show that if the Choi state can be detected by the Triangle Criterion, then the channel can be used to produce a single-qubit magic state with stabilizer input states and stabilizer operations. The central observation is that, by the definition of the Choi state,
\begin{equation}
    \rho_{\mathcal{C}}=\mathcal{I}_n\otimes\mathcal{C}(\Phi_n^+),
\end{equation}
preparing the Choi state consists of only two steps: (i) preparing the $2n$-qubit maximally entangled state; and (ii) applying the channel $\mathcal{C}$ to the $n$-qubit subsystem. Since the maximally entangled state is itself a stabilizer state, the preparation of the Choi state with query access to the channel does not require additional magic resource. Therefore, if the Choi state can be detected by the $(n+m)$-qubit Triangle Criterion, then one can use $\mathcal{C}$ to produce a single-qubit magic state by first preparing its Choi state and then transforming that Choi state into a single-qubit magic state via Theorem~\ref{thm:bound_magic}.

\begin{figure}
    \centering
    \includegraphics[width=0.25\linewidth]{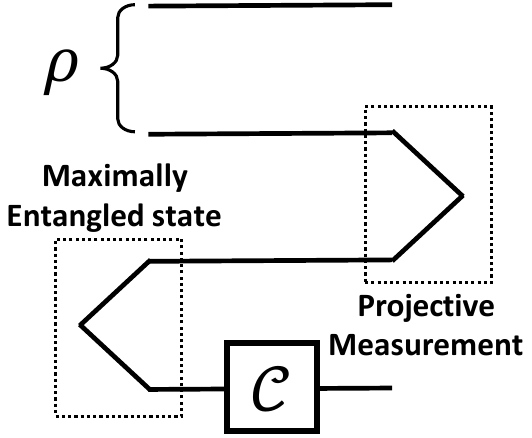}
    \caption{The left dashed black box represents the input maximally entangled state $\Phi^+_n$, and the right dashed black box represents the projective measurement onto $\Phi^+_n$. The first and fourth legs at the right side represents the output state $\mathcal{I}\otimes\mathcal{C}(\rho)$. }
    \label{fig:choi}
\end{figure}

To prove the converse direction, we introduce another observation. The action of $\mathcal{C}$ on any input state can also be represented in terms of its Choi state:
\begin{equation}
\mathcal{I}\otimes\mathcal{C}(\rho)=2^{2n}\Tr_{n,n}[(\rho\otimes\rho_{\mathcal{C}})(I\otimes\Phi^+_n\otimes I)],
\end{equation}
where $I$ denotes the identity operator and $\Tr_{n,n}$ denotes the partial trace over the $n$-qubit subsystem of $\rho$ on which the channel acts and the $n$-qubit subsystem of $\rho_{\mathcal{C}}$ introduced through the maximally entangled state. This equation can be understood graphically and operationally from Fig.~\ref{fig:choi}, where a Bell-state measurement (BSM) is performed between the second subsystem of $\rho$ and the first subsystem of $\rho_{\mathcal{C}}$. Only the outcome corresponding to the projection onto $\Phi^+_n$ is kept, while all other outcomes are discarded. Note that when the input state $\rho$ is a stabilizer state, the preparation of $\mathcal{I}\otimes\mathcal{C}(\rho)$ can be equivalently viewed as a stabilizer operation acting on the Choi state $\rho_{\mathcal{C}}$, since the BSM is itself a stabilizer operation. Therefore, if the channel can produce a single-qubit magic state, then there exists a stabilizer operation that transforms the Choi state $\rho_{\mathcal{C}}$ into a single-qubit magic state. By Theorem~\ref{thm:bound_magic}, this implies that $\rho_{\mathcal{C}}$ can be detected by the Triangle Criterion, which completes the proof.

\subsection{Multi-qubit magic distillation}\label{app:single_distillation}

We now give a two-qubit state $\rho$ with the following properties: 
Using a single copy of $\rho$, one cannot detect magic using Triangle Criterion, and thus one cannot distil the state to a single-qubit magic state via stabilizer operations. 
Yet, using two copies $\rho^{\otimes 2}$, it can be detected with the Triangle Criterion and thus can be reduced to a single-qubit magic state $\rho^\prime$. 
Further, we find that this $\rho^\prime$ can be used for magic state distillation via established distillation protocols~\cite{bravyi2005universal,reichardt2006quantum}. The two-qubit state $\rho$ is given by
\begin{equation}
\begin{aligned}
    \rho=\frac{1}{4}(
    I\otimes I+\frac{1}{\sqrt{12}}(
    &I\otimes X
    + \frac{1}{\sqrt{5}} I\otimes Y  
    + \frac{\sqrt{5}}{2} I\otimes Z
    +\frac{1}{4} X\otimes I
    + X\otimes X
    - \frac{\sqrt{5}}{2} X\otimes Y
    -\frac{1}{2} X\otimes Z
    + \frac{\sqrt{5}}{2} Y\otimes I \\
    &- \frac{1}{\sqrt{5}} Y\otimes X 
     +\frac{1}{2}   Y\otimes Y
    -\frac{1}{4}   Y\otimes Z
     +\frac{\sqrt{5}}{8} Z\otimes I
    +\frac{1}{4}   Z\otimes X
    - \frac{\sqrt{5}}{2} Z\otimes Y
+\frac{\sqrt{5}}{8} Z\otimes Z
    ))
\end{aligned}
\end{equation}
To distil the single-qubit state from $\rho^{\otimes 2}$, we use the circuit shown in Fig.~\ref{fig:distill_circuit}. The success probability of the protocol is $P_{\text{success}}\approx 0.129$.
The distilled single-qubit state can be approximately written as
\begin{equation}
  \rho'\approx \frac{1}{2}(I +  0.1844 X + 0.3334 Y + 0.6544 Z)
\end{equation}
which yields
\begin{equation}
\abs{\Tr(X\rho^\prime)}+\abs{\Tr(Y\rho^\prime)}+\abs{\Tr(Z\rho^\prime)}\approx1.172>\frac{3}{\sqrt{7}}
\end{equation}
meaning it can be distilled into the magic state $\ketbra{H}{H}=\frac{1}{2}(I+\frac{1}{\sqrt{3}}(X+Y+Z))$ with the protocol as introduced in Ref.~\cite{bravyi2005universal} (see also Ref.~\cite{reichardt2006quantum}).
Together with stabilizer operations, $\ketbra{H}{H}$ can also realize universal quantum computation and thus is capable of producing $T$ states.

\begin{figure}[htbp]
	\centering	
    \subfigimg[width=0.18\textwidth]{}{DistillationCircuit.pdf}
	\caption{Circuit to distil $\rho^{\otimes 2}$ to a single qubit state $\rho^\prime$, which can be further distilled using magic state distillation protocols. This circuit uses two CNOT gates, a Hadamard gate, and a Pauli-$Y$ gate. }
        \label{fig:distill_circuit}
\end{figure}

\section{The minimal purity of magic state}\label{app:purity}
In this section, we explicitly construct a state $\rho$ with purity $1/(d-1/2)$ that lies at the boundary of the stabilizer polytope:
\begin{align}
    \rho &= \frac{1/2}{d-1/2}(\ketbra{0}+\ketbra{+}+\ketbra{+i})\otimes \ketbra{0^{n-1}} + \frac{1}{d-1/2}\sum_{z\in \{0, 1\}^n \setminus \{0^n, 10^{n-1}\}} \ketbra{z}\\
    &= \frac{1}{d-1/2}I^{\otimes n} + \frac{1/2}{d-1/2}(\ketbra{0}+\ketbra{+}+\ketbra{+i}-2I)\otimes \ketbra{0^{n-1}}. \label{eq:rho_upper_bound}
\end{align}
Direct calculation gives $\Tr(\rho^2)=1/(d-1/2)$. To prove that $\rho$ lies at the boundary of the stabilizer polytope, we only need to show that for any stabilizer state $\sigma$, 
\begin{equation}
    \Tr(\rho\sigma)\leq \Tr(\rho^2)=\frac{1}{d-1/2}. \label{eq:upper_bound_constraint}
\end{equation}
And the equality can be satisfied by some $\sigma$.
By \eqref{eq:rho_upper_bound}, 
\begin{equation}
    \Tr(\rho\sigma) = \frac{1}{d-1/2}+\frac{1/2}{d-1/2}\Tr\left[(\ketbra{0}+\ketbra{+}+\ketbra{+i}-2I)\sigma_0\right],
\end{equation}
where $\sigma_0=\left(I\otimes \bra{0^{n-1}}\right)\sigma\left(I\otimes \ket{0^{n-1}}\right)$. It is known that $\sigma_0$ is proportional to a single-qubit stabilizer state. Enumerating $\sigma_0\propto \ketbra{0}, \ketbra{1}, \ketbra{+}, \ketbra{-}, \ketbra{+i}, \ketbra{-i}$, we find that $\Tr(\rho\sigma)\leq 1/(d-1/2)$ in all cases and the equality can be satisfied when $\sigma_0\propto \ketbra{0}{0}, \ketbra{+}{+}, \ketbra{+i}{+i} $. 

This argument can be generalized to prove the following theorem:
\begin{theorem}\label{thm: a_n non-increasing}
    Denote the infimum of all possible purity values of $n$-qubit magic states by $r_n$. Then $\{2^n-1/r_n\}$ is non-increasing. Equivalently, if we write $r_n=1/(2^n-a_n)$, then $\{a_n\}$ is non-increasing.
\end{theorem}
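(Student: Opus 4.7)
The plan is to prove $a_{n+1} \le a_n$, which is equivalent to $r_{n+1} \le 1/(2^{n+1} - a_n)$. The idea is to take a near-optimal $n$-qubit magic state $\rho_n$ with $\mathrm{Tr}(\rho_n^2)$ close to $r_n$, and lift it to an $(n+1)$-qubit magic state whose purity is pushed further down by padding with a maximally-mixed stabilizer block supported on the subspace orthogonal to $\rho_n \otimes \ketbra{0}{0}$.

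Concretely, I would consider the state
\begin{equation}
\rho_{n+1} \;=\; \beta\, \rho_n \otimes \ketbra{0}{0} \;+\; (1-\beta)\, \frac{I^{\otimes n}}{2^n} \otimes \ketbra{1}{1},
\end{equation}
where $\beta \in (0,1)$ is to be optimized. Because the two summands are supported on orthogonal eigenspaces of the last qubit, all cross terms in $\rho_{n+1}^2$ vanish and $\mathrm{Tr}(\rho_{n+1}^2) = \beta^2 \mathrm{Tr}(\rho_n^2) + (1-\beta)^2/2^n$. A one-variable minimization over $\beta$ yields the clean identity
\begin{equation}
\frac{1}{\mathrm{Tr}(\rho_{n+1}^2)} \;=\; \frac{1}{\mathrm{Tr}(\rho_n^2)} + 2^n,
\end{equation}
so as $\mathrm{Tr}(\rho_n^2) \to r_n = 1/(2^n - a_n)$ the purity of the lifted state approaches $1/(2^{n+1} - a_n)$, which is exactly the target bound.

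The main step requiring actual work is checking that $\rho_{n+1}$ is a magic state. I would argue by contradiction: suppose $\rho_{n+1} = \sum_i p_i \psi_i$ is a convex mixture of $(n+1)$-qubit stabilizer pure states. Post-select the last qubit on $\ket{0}$: the second summand is annihilated and one obtains $\beta \rho_n$ on the first $n$ qubits. On the stabilizer side, for each $i$ the operator $(I^{\otimes n}\otimes \bra{0})\psi_i(I^{\otimes n}\otimes \ket{0})$ equals $q_i \psi_i'$ with $q_i \ge 0$ and $\psi_i'$ an $n$-qubit stabilizer pure state, since computational-basis post-selection of a stabilizer pure state is, after normalization, again a stabilizer pure state. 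Dividing by $\sum_i p_i q_i = \beta$ then exhibits $\rho_n$ as a mixture of stabilizer pure states, contradicting its magic.

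Combining the two parts gives $r_{n+1} \le 1/(2^{n+1} - a_n)$ and hence $a_{n+1} \le a_n$. The only subtlety is that $r_n$ is an infimum rather than a minimum, handled in the standard way by taking a sequence $\rho_n^{(k)}$ with $\mathrm{Tr}((\rho_n^{(k)})^2)\to r_n$, running the construction above for each $k$, and passing to the limit. The purity bookkeeping is a routine optimization; I expect the conceptual crux to be the magic-preservation step, where the projection/post-selection argument sketched above does the work.
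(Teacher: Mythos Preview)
Your argument is correct. The lifted state you write down is exactly the paper's construction (for $m=n+1$): with your optimal $\beta=\frac{1}{2^n\Tr(\rho_n^2)+1}$, your $\rho_{n+1}$ coincides with the paper's $\rho_m$, and your purity identity $1/\Tr(\rho_{n+1}^2)=1/\Tr(\rho_n^2)+2^n$ is the same computation. The difference is in how the lift is shown to witness the bound on $r_{n+1}$. The paper takes $\rho_n$ to be a \emph{stabilizer} state on the boundary of the $n$-qubit stabilizer polytope with purity $r_n$, and then proves that $\rho_m$ again lies on the boundary of the $m$-qubit polytope via the facet inequality $\Tr(\rho_m\sigma)\le\Tr(\rho_m^2)$ for all stabilizer $\sigma$, which in turn uses that the facet through $\rho_n$ is perpendicular to $\rho_n$. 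You instead take $\rho_n$ to be a genuine magic state and verify directly that $\rho_{n+1}$ is magic by the post-selection argument. Your route is a bit more elementary (it avoids the supporting-hyperplane geometry and the implicit claim that a boundary point of the polytope is a limit of magic states) and handles the infimum-versus-minimum issue cleanly via the sequence argument; the paper's route has the advantage of working verbatim for arbitrary $m>n$ in one shot rather than iterating $n\to n+1$.
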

\begin{proof}
    Fix $n<m$. Write $m=n+t$. Let $\rho_n$ be an $n$-qubit state on the boundary of the stabilizer polytope with purity $r_n$. Consider the following state:
    \begin{equation}
        \rho_m = \frac{1}{(2^{m}-2^n)r_n+1} \rho_n\otimes \ketbra{0^t} + \frac{r_n}{(2^{m}-2^n)r_n+1}\left(I_{m}-I_{n}\otimes \ketbra{0^t}
        \right).
    \end{equation}
    It is easy to verify that $\rho_m\ge 0$ and
    \begin{equation}
        \Tr(\rho_m) = \frac{1}{(2^{m}-2^n)r_n+1}+\frac{r_n}{(2^{m}-2^n)r_n+1}(2^m-2^n)=1.
    \end{equation}
    Therefore, $\rho_m$ is a valid quantum state. Furthermore, since $\rho_n$ is a mix of $n$-qubit stabilizer states, $\rho_m$ is a mix of $m$-qubit stabilizer states, thus $\rho_m$ lies in the $m$-qubit stabilizer polytope. We now prove that $\rho_m$ indeed lies on the boundary of the stabilizer polytope by showing that $\Tr(\rho_m\sigma)\leq \Tr(\rho_m^2)$ for all $m$-qubit stabilizer states $\sigma$. By definition,
    \begin{equation}
        \Tr(\rho_m^2) = \frac{r_n}{[(2^{m}-2^n)r_n+1]^2} + \frac{r_n^2}{[(2^{m}-2^n)r_n+1]^2}(2^m-2^n)=\frac{r_n}{(2^m-2^n)r_n+1},
    \end{equation}
    \begin{equation}
        \Tr(\rho_m\sigma) = \frac{r_n}{(2^m-2^n)r_n+1}+\frac{1}{(2^m-2^n)r_n+1}(\Tr(\rho_m\sigma_0)-r_n\Tr(\sigma_0)),
    \end{equation}
    \begin{equation}
        \Tr(\rho_m^2)-\Tr(\rho_m\sigma) = \frac{1}{(2^m-2^n)r_n+1}(r_n\Tr(\sigma_0)-\Tr(\rho_m\sigma_0)), \label{eq: a_n non-increasing proof}
    \end{equation}
    where $\sigma_0=(I_n\otimes \bra{0^t})\sigma(I_n\otimes \ket{0^t})$. 
    As the projection towards $\ket{0^t}$ is a stabilizer operation, $\sigma_0=r\sigma'$ for some $r\ge 0$ and $n$-qubit stabilizer state $\sigma'$. Since $\rho_n$ is a state on the boundary of $n$-qubit stabilizer polytope, the facet containing $\rho_n$ is perpendicular to $\rho_n$, thus $\Tr(\rho_n\sigma')\leq \Tr(\rho_n^2)=r_n$. According to~\eqref{eq: a_n non-increasing proof}, $\Tr(\rho_m^2)-\Tr(\rho_m\sigma)\ge 0$, so $\rho_m$ is on the boundary of the stabilizer polytope. By definition of $r_m$, $r_m\leq \Tr(\rho_m^2)$, implying that
    \begin{equation*}
        r_m\leq \frac{r_n}{(2^m-2^n)r_n+1}\Rightarrow 2^m-\frac{1}{r_m}\leq 2^n-\frac{1}{r_n}.\qedhere
    \end{equation*}
\end{proof}

In Table 2 and Eq.~(22) of Ref.~\cite{reichardt2006quantum}, the author listed all eight different kinds of surfaces in two-qubit stabilizer polytope.
After calculation, we find that the minimal purities on these eight different surfaces are $\frac{3}{10}$, $\frac{11}{36}$, $\frac{5}{16}$, $\frac{11}{36}$, $\frac{7}{20}$, $\frac{4}{13}$, $\frac{43}{140}$, and $\frac{2}{7}$ respectively.
The minimal purity value is $\frac{2}{7}$, satisfying the conjecture of $\frac{1}{d-1/2}$ with $d=4$.

Unfortunately, we currently only managed to formally establish the following lower bound, which illustrates two different approaches that are insufficient to yield our conjecture.
\begin{theorem}
If the purity of a $d$-dimensional state is lower than $\frac{1}{d-1/d}$, it is a mixture of stabilizer states.
\end{theorem}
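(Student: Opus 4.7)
My plan is to Pauli-expand $\rho$ and convert the purity hypothesis into an $\ell^1$ bound on the Pauli coefficients via Cauchy--Schwarz, which then allows a direct convex decomposition of $\rho$ into stabilizer mixed states of the form $(I\pm P)/d$. Concretely, I would write $\rho = \frac{1}{d}\bigl(I + \sum_{P\neq I} c_P P\bigr)$ with $c_P=\Tr(P\rho)$, where the sum runs over the $d^2-1$ non-identity Hermitian Paulis (up to phase). The purity then equals $\Tr(\rho^2)=\frac{1}{d}\bigl(1+\sum_{P\neq I} c_P^2\bigr)$, so the hypothesis $\Tr(\rho^2)\le 1/(d-1/d) = d/(d^2-1)$ rearranges into $\sum_{P\neq I} c_P^2 \le 1/(d^2-1)$. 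Cauchy--Schwarz with $d^2-1$ terms then yields $\sum_{P\neq I} |c_P| \le \sqrt{(d^2-1)\sum_{P\neq I} c_P^2} \le 1$. This single inequality is precisely where the specific constant $1/d$ appearing in $d-1/d$ enters.

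Next, I define $\tau_P^s := (I+sP)/d$ for each non-identity Pauli $P$ and sign $s\in\{+1,-1\}$. Because $P$ has eigenvalues $\pm 1$ each with multiplicity $d/2$, $\tau_P^s$ is the normalized projector onto the $s$-eigenspace of $P$, which is a uniform mixture of the $d/2$ common stabilizer eigenstates of any maximal abelian Pauli subgroup containing $P$, hence itself a stabilizer mixed state. The candidate decomposition is
\begin{equation*}
\rho = \sum_{P\neq I}|c_P|\,\tau_P^{\mathrm{sgn}(c_P)} + \mu\,\frac{I}{d},\qquad \mu:=1-\sum_{P\neq I}|c_P|\ge 0,
\end{equation*}
where nonnegativity of $\mu$ is exactly the Cauchy--Schwarz bound and the Pauli coefficients on both sides agree term by term. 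Since $I/d$ is itself a stabilizer mixed state, this exhibits $\rho$ as a convex combination of stabilizer states. The mutually-unbiased-bases variant is formally parallel: starting from the resolution $\rho = \sum_{b=1}^{d+1}\sum_i p_i^{(b)}\Pi_i^{(b)} - I$ over the $d+1$ stabilizer MUBs, together with the identity $\sum_{b,i}(p_i^{(b)})^2 = \Tr(\rho^2)+1$, one absorbs the $-I$ into the stabilizer projectors and controls the residual through the same purity budget.

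The main obstacle to closing the gap to the conjectured bound $1/(d-1/2)$ is the Cauchy--Schwarz step, which is tight only when all $|c_P|$ are of equal magnitude across all $d^2-1$ non-identity Paulis. Extremal stabilizer-boundary states such as the one exhibited in Theorem~\ref{thm:minimal_purity} have Pauli support on only polynomially many terms in $n$, so Cauchy--Schwarz is wildly loose for them. To sharpen the bound I would replace the rank-$d/2$ building blocks $(I\pm P)/d$ with higher-rank stabilizer projectors $\frac{1}{d}\prod_{j=1}^k(I+s_j P_j)$ supported on commuting tuples $\{P_j\}$, each of which cancels $2^k-1$ Pauli terms at once and therefore tightens the $\ell^1$ budget, at the cost of a considerably more delicate combinatorial accounting over stabilizer subgroups that is what appears to be needed to reach the conjectured constant $1/2$.
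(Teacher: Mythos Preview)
Your proof is correct and essentially identical to the paper's first proof: Pauli-expand $\rho$, convert the purity bound into $\sum_{P\neq I}|c_P|\le 1$ via Cauchy--Schwarz (the paper phrases this step as ``when $|t_i|=1/(d^2-1)$ the purity reaches its minimum,'' which is the same inequality), and then decompose into the stabilizer building blocks $(I\pm P)/d$ plus a leftover $I/d$. The paper also records the MUB variant you sketch, and your closing remarks about replacing $(I\pm P)/d$ by projectors onto larger commuting Pauli subgroups to cancel more terms (note these are \emph{lower}-rank, not higher-rank) coincide with the paper's own suggestion in the Discussion for how one might sharpen the bound toward $1/(d-1/2)$.
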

\begin{proof}
We here provide two different proofs.

The first proof of this theorem is based on a simple observation, that $I\pm P$ with $P$ being an arbitrary Pauli operator is proportional to a stabilizer state.
Given any density matrix $\rho$, we can decompose it in the Pauli basis as 
\begin{equation}
\rho=\frac{I}{d}+\frac{1}{d}\sum_{i=1}^{d^2-1}t_i P_i
\end{equation}
with the purity being $\Tr(\rho^2)=\frac{1}{d}+\frac{1}{d}\sum_it_i^2$.
If 
\begin{equation}
\sum_{i=1}^{d^2-1}\abs{t_i}\le 1,
\end{equation}
we can rewrite the density matrix as
\begin{equation}
\rho=\frac{1}{d}\left(1-\sum_i \abs{t_i}\right)I+\frac{1}{d}\sum_i\abs{t_i}\left(I+\mathrm{sign}(t_i)P_i\right).
\end{equation}
As every term at the R.H.S. is stabilizer state, the mixture of them is also a stabilizer state.
It is easy to prove that when $\abs{t_i}=\frac{1}{d^2-1}$ for all $i$, the purity reaches its minimum, being
\begin{equation}
\Tr(\rho^2)=\frac{1}{d}+(d^2-1)\times\frac{1}{(d^2-1)^2}\times \frac{1}{d}=\frac{1}{d}+\frac{1}{d(d^2-1)}=\frac{1}{d-1/d}.
\end{equation}

The second proof uses the conclusion that we can construct $d+1$ different mutually unbiased bases with $d(d+1)$ stabilizer state~\cite{Zhu2015MUBClifford2Design}.
Therefore, the minimal purity of states outside of the polytope consisted with these $d(d+1)$ states gives a lower bound for the magic states.
In Ref.~\cite{Bengtsson2005ComplementarityPolytope}, the authors calculated the inner radius of the polytope, with the corresponding purity also being $\frac{1}{d-1/d}$.

\end{proof}

\section{Absolute non-resourceful ball}\label{app:absolute}

It is known that the purity of a given state does not change under unitary evolution.
In this section, we will prove that the purity function also has certain stability even under measurement and post-selection, i.e., Theorem~\ref{thm:purity_reduction}.
\begin{proof}
First, to calculate the maximal purity, it is sufficient to only consider rank-1 projective measurement in system $B$, as other measurements result in a mixture of rank-1 measurement state in system $A$.
Therefore, we can write the measurement operator as 
\begin{equation}
\Pi=I_A\otimes\ketbra{\psi_B}{\psi_B}=\begin{bmatrix}
    I_S & 0 \\ 0 & 0
\end{bmatrix},
\end{equation}
which is acted on the rotated state written in the same basis
\begin{equation}
U_{AB}\rho_{AB}U_{AB}^\dagger=\begin{bmatrix}
    \rho_S & X \\ X^\dagger & \rho_R,
\end{bmatrix},
\end{equation}
where $I_S, \rho_S$ are $d_A\times d_A$, $\rho_R$ is $(d-d_A)\times (d-d_A)$. Our target is to upper bound $\Tr(\rho_S^2)/\Tr(\rho_S)^2$.
Define
\begin{equation}
q = \Tr(\rho_S), \ P_R=\Tr(\rho_R^2), \ P_S=\Tr(\rho_S^2).
\end{equation}
It is easy to prove that 
\begin{equation}
\frac{q^2}{d_A}\le P_S \le q^2, \frac{(1-q)^2}{d-d_A}\le P_R\le (1-q)^2, \ P_S+P_R=\frac{1}{d-c}-\Tr(XX^\dagger)-\Tr(X^\dagger X)\leq \frac{1}{d-c}.
\end{equation}
So we have
\begin{equation}
    \frac{\Tr(\rho_S^2)}{\Tr(\rho_S)^2}\leq \frac{1}{q^2}\left(\frac{1}{d-c}-P_R\right)\leq \frac{1}{q^2}\left(\frac{1}{d-c}-\frac{(1-q)^2}{d-d_A}\right)\coloneqq f(q).
\end{equation}
Recognizing the $f(q)$ as a quadratic function of $1/q$, we can see it reaches the maximum when $q=(d_A-c)/(d-c)$. Hence,
\begin{equation}
    \frac{\Tr(\rho_S^2)}{\Tr(\rho_S)^2}\leq f(q)\leq f\left(\frac{d_A-c}{d-c}\right) = \frac{1}{d_A-c}.
\end{equation}
\end{proof}

\section{Detectability of magic states}
\subsection{Unfaithful magic state}\label{app:faithful}
Consider the magic witness of the form $W=\alpha I-\ketbra{\psi}{\psi}$ with $\psi$ being a magic pure state.
To make sure that $W$ is a magic witness, we require that for all stabilizer state $\rho$, it satisfies
\begin{equation}
\Tr(W\rho)=\alpha-\bra{\psi}\rho\ket{\psi}\ge 0.
\end{equation}
Therefore, we can set $\alpha\coloneqq\max_{\rho\in \mathrm{Stab}}\bra{\psi}\rho\ket{\psi}$.
Next, we want to calculate the minimal value of $\alpha$ for all magic state $\psi$.

\begin{lemma}
The value of $\alpha=\min_{\psi}\left[\max_{\rho\in\mathrm{Stab}}\bra{\psi}\rho\ket{\psi}\right]$ is lower bounded by $\frac{3}{d+2}$, where $d$ is the dimension of $\psi$.
\end{lemma}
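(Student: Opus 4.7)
\emph{Proof plan.} The strategy is to rewrite the inner maximum as a maximum over pure stabilizer states and then exploit the fact that the set of multi-qubit pure stabilizer states forms a complex projective $3$-design. Let $\{\phi_j\}_{j=1}^N$ denote all pure stabilizer states on $n$ qubits, with $N=2^n\prod_{k=1}^n(2^k+1)$ as counted in Appendix~\ref{app:number}, and write $M(\psi)$ for the maximal value of $|\braket{\psi|\phi_j}|^2$ over $j$. Since the set of stabilizer states is the convex hull of the pure ones and the map $\rho\mapsto\braket{\psi|\rho|\psi}$ is linear, its maximum is attained at an extreme point, so $\max_{\rho\in\mathrm{Stab}}\braket{\psi|\rho|\psi}=M(\psi)$, and it suffices to prove $M(\psi)\ge 3/(d+2)$ for every pure state $\psi$.

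Next I would invoke the theorem of Zhu and Kueng--Gross that the multi-qubit Clifford group is a unitary $3$-design, so that the uniform distribution over $\{\phi_j\}$ is a complex projective $3$-design. The second and third frame potentials of such a design are then fixed independently of $\psi$:
\[
\sum_j|\braket{\psi|\phi_j}|^4=\frac{2N}{d(d+1)},\qquad \sum_j|\braket{\psi|\phi_j}|^6=\frac{6N}{d(d+1)(d+2)}.
\]
Combining these with the pointwise bound $|\braket{\psi|\phi_j}|^6\le M(\psi)\,|\braket{\psi|\phi_j}|^4$, which is valid because $|\braket{\psi|\phi_j}|^2\le M(\psi)$ for each $j$, and summing over $j$ yields
\[
\frac{6N}{d(d+1)(d+2)}\;\le\; M(\psi)\cdot\frac{2N}{d(d+1)},
\]
which rearranges to $M(\psi)\ge 3/(d+2)$, as required.

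The main obstacle is justifying the $3$-design step cleanly: the reference already invoked in the preceding appendix records only the $2$-design identity, and using only that in the same H\"older-type argument with one lower power would give the weaker bound $M(\psi)\ge 2/(d+1)$. The sharper $3/(d+2)$ requires the full multi-qubit $3$-design property (Zhu; Kueng--Gross), which should be cited separately. As a sanity check, the single-qubit case can also be verified directly by Bloch-sphere geometry: minimizing $\max(|x|,|y|,|z|)$ on the unit sphere yields $1/\sqrt{3}$, whence $M\ge(1+1/\sqrt{3})/2>3/4=3/(d+2)$, in agreement with the general bound.
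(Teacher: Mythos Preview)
Your proposal is correct and follows essentially the same argument as the paper: reduce the inner maximum to pure stabilizer states by convexity, use the (projective) $3$-design property of stabilizer states to fix the second and third moments $\mathbb{E}_i x_i^2=2/[d(d+1)]$ and $\mathbb{E}_i x_i^3=6/[d(d+1)(d+2)]$, and combine $x_i^3\le M(\psi)\,x_i^2$ with a ratio of those moments to obtain $M(\psi)\ge 3/(d+2)$. The only cosmetic difference is that the paper phrases everything via averages rather than sums and directly asserts that stabilizer states form a state $3$-design, whereas you route through the Clifford unitary $3$-design result; both are equivalent here.
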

\begin{proof}
First, as $\bra{\psi}\rho\ket{\psi}$ is a linear function in $\rho$, it is sufficient to only consider pure stabilizer states $\{\phi_i\}_i$.
Define $x_i\coloneqq\abs{\bra{\psi}\ket{\phi_i}}^2$ and $X=\max_i x_i$.
Use the fact that the stabilizer states form a state 3-design, we have 
\begin{equation}
\mathbb{E}_ix_i=\frac{1}{d}, \ \ \ \mathbb{E}_i x_i^2=\frac{2}{d(d+1)}, \ \ \ \mathbb{E}_i x_i^3=\frac{6}{d(d+1)(d+2)}.
\end{equation}
By definition, $x_i^3\le Xx_i^2$ and thus $\mathbb{E}_ix_i^3\le X\mathbb{E}_ix_i^2$.
We have
\begin{equation}
X\ge \frac{\mathbb{E}_ix_i^3}{\mathbb{E}_ix_i^2}=\frac{3}{d+2},
\end{equation}
which concludes the proof.
\end{proof}

This conclusion means that, given a magic state $\rho$, if it can be detected by some fidelity-based magic witness, the fidelity between it and some magic pure state should be larger than $\frac{3}{d+2}$.
We will show that this is not possible for the magic state we find whose purity is arbitrarily close to $\frac{1}{d-1/2}$.
Any state $\rho$ can be decomposed as $\rho=\frac{I}{d}+t\sigma$ with $t$ being a positive coefficient and $\sigma$ being a Hermitian matrix satisfying $\Tr(\sigma)=0$ and $\Tr(\sigma^2)=1$.
Due to the purity condition, we have 
\begin{equation}
t=\sqrt{\frac{1}{d-1/2}-\frac{1}{d}}=\sqrt{\frac{1/2}{d(d-1/2)}}.
\end{equation}
Due to the requirements for $\sigma$, we have $\bra{\psi}\sigma\ket{\psi}\le\frac{1}{\sqrt{2}}$.
We thus have
\begin{equation}
\bra{\psi}\rho\ket{\psi}\le\frac{1}{d}+t\bra{\psi}\sigma\ket{\psi}\le\frac{1}{d}+\frac{1}{2}\sqrt{\frac{1}{d(d-1/2)}},
\end{equation}
which is less than $\frac{3}{d+2}$ when $d>2$.

\subsection{Detectability of Triangle Criterion and multi-qubit magic}~\label{app:detectability}
In Ref.~\cite{liu2022detectability}, the authors proved a conclusion regarding the detectability of a witness operator (although the original statement is about entanglement witness, it actually works for all witness operators):
\begin{fact}[Theorem 2 of Ref.~\cite{liu2022detectability}]\label{fact:witness}
Given states sampled according to distribution $\pi_{d,k}$ and witness operator satisfying $\Tr(W)>0$, the probability for successfully detection is upper bounded by
\begin{equation}
\underset{\rho\sim\pi_{d,k}}{\mathrm{Pr}}\left[\Tr(W\rho)<0\right]<2\exp{-\left(\sqrt{1+\frac{\Tr(W)}{\sqrt{\Tr(W^2)}}}-1\right)k}.
\end{equation}
\end{fact}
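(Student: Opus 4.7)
The plan is to express the event $\{\Tr(W\rho)<0\}$ as a tail of a sum of i.i.d.\ quadratic forms in complex Gaussians and then to bound it by a Chernoff argument using the closed-form Gaussian-quadratic moment generating function.

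First I would invoke the definition of $\pi_{d,k}$ to write $\rho=\Tr_k(\ket{\Psi}\!\bra{\Psi})$ with $\ket{\Psi}$ Haar-random on the unit sphere of $\mathbb{C}^{dk}$, so that $\Tr(W\rho)=\bra{\Psi}(W\otimes I_k)\ket{\Psi}$. Realizing $\ket{\Psi}=g/\|g\|$ for a standard complex Gaussian $g\in\mathbb{C}^{dk}$ and partitioning $g=(g_1,\ldots,g_k)$ into $k$ i.i.d.\ standard complex Gaussians in $\mathbb{C}^d$, the quadratic form decomposes as
\begin{equation*}
g^\dagger(W\otimes I_k)\,g=\sum_{j=1}^{k}g_j^\dagger W g_j,
\end{equation*}
a sum of $k$ i.i.d.\ random variables with mean $\Tr(W)$ and variance $\Tr(W^2)$. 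Since the positive normalization $\|g\|^2$ does not affect the sign, the event $\{\Tr(W\rho)<0\}$ coincides with $\{\sum_{j}g_j^\dagger W g_j<0\}$.

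Next I would apply a Chernoff inequality to this i.i.d.\ sum. Using the closed form $\mathbb{E}[e^{-t\,g_1^\dagger W g_1}]=\det(I+tW)^{-1}$, valid for any $t>0$ with $I+tW\succ 0$,
\begin{equation*}
\Pr\!\left[\sum_{j=1}^{k}g_j^\dagger W g_j<0\right]\le\det(I+tW)^{-k}=\prod_{i=1}^{d}(1+t\lambda_i)^{-k},
\end{equation*}
where $\{\lambda_i\}_{i=1}^{d}$ are the eigenvalues of $W$. The task then reduces to minimizing $-k\sum_{i}\log(1+t\lambda_i)$ over the admissible range of $t$, using only the summary statistics $\Tr(W)=\sum_i\lambda_i$ and $\Tr(W^2)=\sum_i\lambda_i^2$.

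The optimization is where the square-root exponent appears. The natural scaling is $t\propto 1/\sqrt{\Tr(W^2)}$, at which the first-order term $-t\Tr(W)$ and the second-order correction $\sim t^2\Tr(W^2)$ become comparable. Applying an elementary upper bound of the form $-\log(1+x)\le -x+x^2/(2(1+x))$ (valid for $x>-1$) term by term, summing, and optimizing in $t$ produces an exponent of the form $k\,\phi(\Tr(W)/\sqrt{\Tr(W^2)})$ with $\phi(y)=\sqrt{1+y}-1$ emerging from straightforward algebra; the multiplicative factor of $2$ absorbs the slack when passing from the exact minimum of the MGF to this closed-form expression. The main obstacle is making the argument uniform in the spectrum of $W$: when $W$ has a large negative eigenvalue, $t$ cannot be chosen freely and the naive second-order Taylor bound diverges as $t\lambda_i\to -1$. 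One must therefore restrict to the admissible interval $0<t<1/\max_i(-\lambda_i)$, verify that the minimizer above lies within it, and check that no dimension or operator-norm factor survives the simplification — so that only the spectral ratio $\Tr(W)/\sqrt{\Tr(W^2)}$ enters the final exponent.
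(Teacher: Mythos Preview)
The paper does not supply a proof of this statement: it is quoted verbatim as a Fact (Theorem~2 of Ref.~\cite{liu2022detectability}) and used as a black box, so there is nothing in the paper itself to compare your attempt against.

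Your reconstruction is nonetheless the standard route for this type of induced-measure tail bound, and the first half is sound: the Gaussian realization of the Haar vector, the observation that the normalization $\|g\|^2$ does not affect the sign, the i.i.d.\ decomposition $g^\dagger(W\otimes I_k)g=\sum_{j=1}^k g_j^\dagger W g_j$, and the Chernoff step via $\mathbb{E}\bigl[e^{-t\,g_1^\dagger W g_1}\bigr]=\det(I+tW)^{-1}$ are all correct and are essentially how the cited reference proceeds. The gap is in the optimization. The elementary inequality you invoke, $-\log(1+x)\le -x+x^2/(2(1+x))$, is false for $x>0$ (for instance $x=1$ gives $-\log 2\approx-0.693>-0.75$), so a term-by-term application across the spectrum of $W$---which for a witness necessarily contains positive eigenvalues---does not go through as written. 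More substantively, you assert that the exponent $\sqrt{1+y}-1$ with $y=\Tr(W)/\sqrt{\Tr(W^2)}$ ``emerges from straightforward algebra,'' but this is exactly the nontrivial content of the cited theorem: one must show that the optimizing $t$ lies in the admissible interval and that only the spectral ratio survives, uniformly in the spectrum of $W$ and with no dimension-dependent factor. Your sketch correctly names this as the main obstacle but does not resolve it.
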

We test this numerically using the Triangle witness $W_{ijk}=\psi_i+\psi_j-\psi_k$ and show the results in Fig.~\ref{fig:prob}. 
For different qubit number $n$, the probability of a single witness operator detecting magic decays exponentially with $k$. 
Notably, the probability is independent of $n$, which satisfies the prediction of Fact~\ref{fact:witness} as $\frac{\Tr(W_{ijk})}{\sqrt{\Tr(W_{ijk}^2)}}$ is a constant.
\begin{figure}[htbp]
	\centering	
    \subfigimg[width=0.3\textwidth]{}{probEachWitnessSampleN10j0r100000M1.pdf}\hfill
	\caption{Probability of detecting magic using a single triangle witness operator $W_{ijk}$, i.e. observing $\Tr(\rho W_{ijk})<0)$. Here, we sample states $\rho\sim \pi_{d,k}$ and plot against traced-out dimension $k$  for different qubit number $n$, where state dimension $d=2^n$. }
        \label{fig:prob}
\end{figure}

Ref.~\cite{liu2022detectability} also generalized their result from a single witness to all linear resource detection methods based on many different linear observables.
Here, by linear detection methods, we mean that the expectation values of a set of $M$ different observables are given, i.e., $\{\Tr(O_i\rho)=r_i\}_{i=1}^M$, and used to decide whether the state have resource or not.
Note that all states with expectation values $\{\Tr(O_i\rho)=r_i\}_{i=1}^M$ also constitute a convex set.
According to the hyperplane separation theorem, this convex set can be separated with the set of the free states with a single witness operator.
As the detection capability of one witness operator has some limitation, one can prove with tools developed in Ref.~\cite{liu2022detectability} that:
\begin{fact}[Theorem 4 of Ref.~\cite{liu2022detectability}]\label{fact:linear_methods}
Given states sampled according to distribution $\pi_{d,k}$ and any linear detection method with $M$ different observables, the probability for successfully detection is upper bounded by
\begin{equation}
2\exp{M\ln{4\sqrt{M}d}-\left(\sqrt{\frac{1}{2}+\min_{W}\frac{\Tr(W)}{\sqrt{\Tr(W^2)}}}-1\right)^2k},
\end{equation}
where $W$ is minimized over all valid resource detection witnesses.
\end{fact}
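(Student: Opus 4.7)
The plan is to reduce any $M$-observable linear detection scheme to a single witness operator via convexity, and then apply Fact~\ref{fact:witness} uniformly over an $\epsilon$-net of candidate witnesses. First, I would note that a linear detection procedure based on observables $\{O_i\}_{i=1}^M$ declares a state $\rho$ magical precisely when the expectation vector $(\Tr(O_1\rho),\dots,\Tr(O_M\rho))$ falls in a fixed acceptance region disjoint from what any stabilizer state can produce. Because both the preimage of that acceptance region in the Hermitian space and the stabilizer polytope are convex, the hyperplane separation theorem yields a witness $W$ lying in the real affine span of $\{I,O_1,\dots,O_M\}$ that certifies each successful detection. The structural point that makes the argument work is that the \emph{effective} family of witnesses relevant to any $M$-observable scheme is thereby confined to an $(M{+}1)$-dimensional subspace of Hermitian operators.

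Next, I would normalize so that $\Tr(W^2)=1$ and build an $\epsilon$-net $\mathcal{N}$ over the unit sphere of that subspace; a standard volumetric covering gives $|\mathcal{N}|\leq (C/\epsilon)^{M+1}$ for an absolute constant $C$. For each $W'\in\mathcal{N}$, Fact~\ref{fact:witness} bounds $\Pr_{\rho\sim\pi_{d,k}}[\Tr(W'\rho)<0]$ by a term depending only on the ratio $\Tr(W')/\sqrt{\Tr(W'^2)}$. To transfer detection by an arbitrary witness $W$ to detection by its nearest net point $W'$, I would use $|\Tr((W-W')\rho)|\leq \lVert W-W'\rVert_\infty\leq \lVert W-W'\rVert_2\leq \epsilon$, valid for every state $\rho$, and absorb the resulting slack into the exponent of Fact~\ref{fact:witness}. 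Balancing the net size $(C/\epsilon)^{M+1}$ against the loss in the exponent fixes $\epsilon$ of order $1/(\sqrt{M}\,d)$, which is what produces the factor $4\sqrt{M}d$ inside the logarithm of the advertised bound.

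The final step is a union bound: the probability that any $W$ in the effective family detects the random state is at most $|\mathcal{N}|$ times the single-witness tail from Fact~\ref{fact:witness}, evaluated at the universal worst-case ratio $\min_W\Tr(W)/\sqrt{\Tr(W^2)}$ over all valid resource witnesses (which is precisely the object appearing in the statement of Fact~\ref{fact:linear_methods}). Because of the slack absorbed in the perturbation step, the single-witness exponent must be weakened from $(\sqrt{1+r}-1)$ to $(\sqrt{1/2+r}-1)^2$, where $r$ denotes the worst-case ratio. Combining $|\mathcal{N}|\leq \exp(M\ln(4\sqrt{M}d))$ multiplicatively with this weakened exponential in $k$ gives exactly the claim.

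The step I expect to be hardest is the perturbation analysis: one must verify that replacing $W$ by its nearest net point $W'$ shifts both the event $\{\Tr(W'\rho)<0\}$ and the witness ratio $\Tr(W')/\sqrt{\Tr(W'^2)}$ by amounts that can be absorbed into Fact~\ref{fact:witness} without destroying the exponential concentration in $k$. Tracking the constants in this tradeoff is what dictates $\epsilon=\Theta(1/(\sqrt{M}d))$ and simultaneously forces the weakening of the single-witness exponent from the linear form $(\sqrt{1+r}-1)$ to the squared form $(\sqrt{1/2+r}-1)^2$. Obtaining this weakening with the specific constants $1/2$ and squaring, rather than some looser proxy, is the delicate part of the argument; everything else is routine covering-number and union-bound bookkeeping.
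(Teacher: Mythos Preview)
The paper does not supply its own proof of this statement: it is quoted verbatim as a \emph{Fact} from Ref.~\cite{liu2022detectability}, preceded only by a one-sentence sketch (``the hyperplane separation theorem'' reduces the $M$-observable scheme to a single witness, whose detection probability is then limited by Fact~\ref{fact:witness}). Your proposal is a faithful and plausible reconstruction of exactly that route---separation into the $(M{+}1)$-dimensional span of $\{I,O_1,\dots,O_M\}$, an $\epsilon$-net over that span, Fact~\ref{fact:witness} on each net point, and a union bound---so there is nothing substantive to compare against in the present paper.

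One small imprecision worth flagging: you assert that ``the preimage of that acceptance region in the Hermitian space'' is convex, but a generic linear detection method need not have a convex acceptance region. This is not fatal---the separation argument works pointwise (for each detected $\rho$, separate the single expectation vector from the projected stabilizer polytope in $\mathbb{R}^M$), and the resulting witness still lands in $\mathrm{span}\{I,O_1,\dots,O_M\}$, which is all the net argument needs. The rest of your outline, including the identification of the perturbation/constant-tracking step as the delicate part, is sound.
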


As proved in \SM{}~\ref{app:number}, the magic Triangle Criterion is equivalent with testing $2^{\mathcal{O}(n^2)}$ different magic witnesses with the form of $W_{ijk}=\psi_i+\psi_j-\psi_k$.
It is easy to verify that $\frac{\Tr(W)}{\sqrt{\Tr(W^2)}}=\frac{1}{\sqrt{2}}$ is a constant.
According to Fact~\ref{fact:witness}, the probability of a single Triangle witness decays exponentially with $k$.
With union bound, we can prove that
\begin{theorem}
Given a $d=2^n$-dimensional state $\rho$ sampled according to the distribution of $\pi_{d,k}$, the probability that it can be detected by the magic Triangle Criterion is upper bounded by $e^{\mathcal{O}(n^2-k)}$.
\end{theorem}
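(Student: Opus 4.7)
The plan is to assemble three ingredients that are already in place: the equivalence (established in \SM{}~\ref{app:number}) between the Triangle Criterion and checking the sign of the expectation value on the family of witnesses $W_{ijk}=\psi_i+\psi_j-\psi_k$; the enumeration bound showing that only $N\le 2^{\mathcal{O}(n^2)}$ such witnesses exist; and the single-witness detection estimate Fact~\ref{fact:witness}. The final probability bound will then follow from a union bound over the $W_{ijk}$.

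First I would compute the two invariants that enter Fact~\ref{fact:witness}, namely $\Tr(W_{ijk})$ and $\Tr(W_{ijk}^2)$. Using $\Tr(\psi_i)=\Tr(\psi_i^2)=1$ and $\Tr(\psi_i\psi_j)=1/2$ for $i\neq j$, direct expansion of $(\psi_i+\psi_j-\psi_k)^2$ gives
\begin{equation}
\Tr(W_{ijk})=1,\qquad \Tr(W_{ijk}^2)=2,
\end{equation}
so $\Tr(W_{ijk})/\sqrt{\Tr(W_{ijk}^2)}=1/\sqrt{2}$ is a positive constant that is independent of $n$, $k$, and the particular triple $(i,j,k)$. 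In particular $\Tr(W_{ijk})>0$, which is the hypothesis required to invoke Fact~\ref{fact:witness}.

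Applying Fact~\ref{fact:witness} uniformly to every $W_{ijk}$ yields the single-witness bound
\begin{equation}
\underset{\rho\sim\pi_{d,k}}{\mathrm{Pr}}\!\bigl[\Tr(W_{ijk}\rho)<0\bigr]<2\exp(-c_2 k),\qquad c_2=\Bigl(\sqrt{1+1/\sqrt{2}}-1\Bigr)>0,
\end{equation}
and the constant $c_2$ does not depend on the triple. A state $\rho$ is flagged by the Triangle Criterion precisely when at least one $W_{ijk}$ has negative expectation, so a union bound over the family gives
\begin{equation}
\underset{\rho\sim\pi_{d,k}}{\mathrm{Pr}}\!\bigl[\rho\text{ detected}\bigr]\le 2N\exp(-c_2 k)\le \exp\!\bigl(\mathcal{O}(n^2)-c_2 k\bigr),
\end{equation}
which is the claimed bound.

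I do not expect a genuine obstacle here: the only potential subtlety is verifying that Fact~\ref{fact:witness} applies verbatim to each $W_{ijk}$, which reduces to $\Tr(W_{ijk})>0$ and is immediate from the arithmetic above. Everything else is a counting argument and a union bound, both of which are already packaged by the preceding sections; the exponent $\mathcal{O}(n^2)$ comes entirely from $\log N$, and the exponent $-c_2 k$ comes entirely from the single-witness estimate.
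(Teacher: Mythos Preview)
Your proposal is correct and follows exactly the paper's own argument: compute $\Tr(W_{ijk})=1$, $\Tr(W_{ijk}^2)=2$ so that $\Tr(W_{ijk})/\sqrt{\Tr(W_{ijk}^2)}=1/\sqrt{2}$, apply Fact~\ref{fact:witness} to each witness, and then union-bound over the $2^{\mathcal{O}(n^2)}$ witnesses counted in \SM~\ref{app:number}. The paper's proof is terser but identical in structure and content.
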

\noindent As the Triangle Criterion is the necessary condition for a state to be distilled by single-qubit magic distillation protocols, $e^{\mathcal{O}(n^2-k)}$ is also the upper bound for single-qubit magic distillation protocols being useful, thus proving Theorem~\ref{thm:single_utility}.

In the main text, we conjecture that the minimal purity of magic state is arbitrarily close to $\frac{1}{d-1/2}$.
We will show that this conjecture gives the lower bound of $\frac{\Tr(W)}{\sqrt{\Tr(W^2)}}$ for magic witness $W$.
\begin{lemma}
If the minimal purity of magic states is arbitrarily close to $\frac{1}{d-1/2}$, the minimal value of $\frac{\Tr(W)}{\sqrt{\Tr(W^2)}}$ is lower bounded by $\frac{1}{\sqrt{2}}$, where $W$ is a valid magic witness satisfying $\Tr(W\rho)\ge0$ for all stabilizer state $\rho$.
\end{lemma}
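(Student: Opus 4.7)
The plan is to exhibit an explicit state $\rho$ whose purity exactly saturates the conjectured threshold $1/(d-1/2)$ and which is aligned against the traceless part of $W$. The conjecture then forces $\rho$ into the stabilizer polytope, and the witness inequality $\Tr(W\rho)\ge 0$ rearranges directly into the claimed bound on $\Tr(W)/\sqrt{\Tr(W^2)}$.

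First I would split $W = (\Tr(W)/d)\,I + W_0$ with $\Tr(W_0)=0$, so that $\Tr(W^2) = \Tr(W)^2/d + \Tr(W_0^2)$. The case $W_0=0$ is trivial, so assume $\Tr(W_0^2) > 0$; and since $I/d$ is a stabilizer state, $\Tr(W)\ge 0$ follows automatically from the witness condition. Next I would define
\begin{equation*}
\rho \;\coloneqq\; \frac{I}{d} \;-\; \frac{t}{\sqrt{\Tr(W_0^2)}}\,W_0, \qquad t \;\coloneqq\; \frac{1}{\sqrt{d(2d-1)}}.
\end{equation*}
A direct computation gives $\Tr(\rho)=1$ and $\Tr(\rho^2) = 1/d + t^2 = 1/(d-1/2)$. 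Positivity of $\rho$ reduces to $\lambda_{\max}(W_0) \le \sqrt{\Tr(W_0^2)}/(td)$, which holds because $\lambda_{\max}(W_0)^2 \le \Tr(W_0^2)$ and $td = \sqrt{d/(2d-1)} \le 1$ for all $d\ge 1$, so no extra work is needed.

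Since $\rho$ is a valid state with purity at most $1/(d-1/2)$, the assumed conjecture gives that $\rho$ is a mixture of stabilizer states, hence $\Tr(W\rho) \ge 0$. Expanding and using $\Tr(W_0) = 0$ together with $\Tr(WW_0) = \Tr(W_0^2)$,
\begin{equation*}
0 \;\le\; \Tr(W\rho) \;=\; \frac{\Tr(W)}{d} - t\sqrt{\Tr(W_0^2)},
\end{equation*}
which rearranges to $\Tr(W_0^2) \le (2d-1)\Tr(W)^2/d$. Adding $\Tr(W)^2/d$ to both sides then yields $\Tr(W^2) \le 2\,\Tr(W)^2$, i.e.\ $\Tr(W)/\sqrt{\Tr(W^2)} \ge 1/\sqrt{2}$.

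The key conceptual obstacle is that the argument turns entirely on the precise constant in the conjecture: the factor $1/\sqrt{2}$ arises because $t^2 = 1/(d(2d-1))$ is exactly the Hilbert--Schmidt radius whose induced purity is $1/(d-1/2)$, and any strictly weaker threshold (such as the rigorously proved $1/(d-1/d)$ from the previous section) would yield a strictly weaker lower bound on the ratio. The only minor technical point is that $\rho$ sits on the boundary of the conjectured region; this is handled either by reading the conjecture as stated (which includes equality) or, more conservatively, by replacing $t$ with $(1-\varepsilon)t$, running the same argument, and taking $\varepsilon\to 0^+$.
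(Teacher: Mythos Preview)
Your proof is correct and follows essentially the same approach as the paper's: decompose $W$ into its trace and traceless parts, construct the state $\rho$ on the conjectured purity boundary aligned against the traceless part of $W$, invoke the conjecture to put $\rho$ in the stabilizer polytope, and rearrange $\Tr(W\rho)\ge 0$ into the bound $\Tr(W^2)\le 2\,\Tr(W)^2$. Your version is in fact slightly more careful than the paper's, since you explicitly verify the positivity of $\rho$ (which the paper merely asserts) and you address the boundary issue via the $(1-\varepsilon)t$ limiting argument.
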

\begin{proof}
Without loss of generality, we assume that the magic witness satisfies $\Tr(W)=1$.
Thus, the witness can be decomposed in the form of
\begin{equation}
    W = \frac{I}{d}+t\sigma
\end{equation}
with $\Tr(\sigma)=0$ and $\Tr(\sigma^2)=1$.
Then, one can verify that the following is a valid density matrix with purity $\frac{1}{d-1/2}$:
\begin{equation}
\rho_0=\frac{I}{d}-\sqrt{\frac{1/2}{d(d-1/2)}}\sigma.
\end{equation}
Due to the conjecture, this state is also a stabilizer state, so
\begin{equation}
\Tr(W\rho_0)=\frac{1}{d}-t\sqrt{\frac{1/2}{d(d-1/2)}}\ge 0.
\end{equation}
Therefore, we have 
\begin{equation}
\Tr(W^2)=\frac{1}{d}+t^2\le \frac{1}{d}+\frac{2d-1}{d}=2
\end{equation}
and 
\begin{equation}
\frac{\Tr(W)}{\sqrt{\Tr(W^2)}}\ge \frac{1}{\sqrt{2}}.
\end{equation}
\end{proof}

Combined with Fact~\ref{fact:witness} and Fact~\ref{fact:linear_methods}, we can arrive at the conclusion made in the main text:
\begin{theorem}
Assume the minimal purity of magic state is arbitrarily close to $\frac{1}{d-1/2}$.
Given states sampled according to distribution of $\pi_{d,k}$ and any magic witness, the probability for successfully detecting magic is upper bounded by
\begin{equation}
2\exp{-\left(\sqrt{1+\frac{1}{\sqrt{2}}}-1\right)^2k}.
\end{equation}
For any linear detection method with $M$ different observables, the probability for successful detection is upper bounded by 
\begin{equation}
2\exp{M\ln{4\sqrt{M}d}-\left(\sqrt{\frac{1}{2}+\frac{1}{\sqrt{2}}}-1\right)^2k}.
\end{equation}
\end{theorem}

\end{document}